\newcommand{\np}{NP}
\newcommand{\poly}{P}
\newcommand{\nph}{{\np}-hard}
\newcommand{\nphns}{{\np}-hardness}
\newcommand{\nphshort}{{\np}-h}
\newcommand{\threesat}{3-SAT}
\newcommand{\mathxxxcssize}{\kappa}  
\newcommand{\mathxxxcins}{\mathcal{I}}   
\newcommand{\mathgcaiins}{\mathcal{E}_{\mathxxxcins}}  
\newcommand{\mathGCPIthreetwoins}{\mathcal{E}_{\mathcal{I}}}
\newcommand{\fpt}{{FPT}}
\newcommand{\no}{NO}
\newcommand{\yes}{YES}
\newcommand{\noins}{{\no}-instance}
\newcommand{\yesins}{{\yes}-instance}
\newcommand{\conr}{consent rule}
\newcommand{\edge}[2]{\{#1,#2\}}
\newcommand{\including}{{adding}}
\newcommand{\excluding}{{deleting}}
\newcommand{\gcii}{{GCAI}}
\newcommand{\gcei}{{GCDI}}
\newcommand{\gcpi}{GCPI}
\newcommand{\unsurehidden}[1]{{#1}} 
\renewenvironment{proof}[1][{\it{Proof}}]{\noindent\textbf{#1} }{\hfill \qed\medskip}
\let\citeasnoun=\cite
\begin{document}

\title{How hard is it to control a group?\thanks{A preliminary version of this paper
was presented at the 6th International Workshop on Computational
Social Choice (COMSOC 2016).}
}

\subtitle{}


\author{Yongjie Yang         \and
        Dinko Dimitrov
}


\institute{Yongjie Yang\at
              Chair of Economic Theory, Saarland University, Germany\\
              \email{yyongjiecs@gmail.com}           
           \and
         Dinko Dimitrov \at
Chair of Economic Theory, Saarland University, Germany\\
\email{dinko.dimitrov@mx.uni-saarland.de}
}

\date{\today}

\maketitle

\begin{abstract}
We consider group identification models in which the aggregation of
individual opinions concerning who is qualified in a given society
determines the set of socially qualified persons. In
this setting, we study the extent to which social qualification can be
changed when societies expand, shrink, or partition themselves. The answers
we provide are with respect to the computational complexity of the
corresponding control problems and fully cover the class of consent
aggregation rules introduced by Samet \& Schmeidler (2003) as well as
procedural rules for group identification. We obtain both polynomial-time solvability results and {\nphns} results.
In addition, we also study these problems from the parameterized complexity
perspective, and obtain some fixed-parameter tractability results.

\keywords{Consent rules \and Procedural rules \and Computational complexity\and Group control\and Parameterized complexity}
\end{abstract}

\section{Introduction}
Group decision making plays an important role in multi-agent systems.
Imagine for instance a set $N$ of agents who have to determine those among them who are
eligible or qualified to complete a task. In such a case the view of all
agents has to be taken into account and a rule for the selection of a subset
of $N$ should be specified. In this paper we consider a specific decision
making model in which each individual qualifies or disqualifies every
individual in $N$, and then a social rule is applied to select the socially
qualified individuals.  This model has been
widely studied under the name \textit{group identification} in
economics~(see~\cite{DBLP:series/sfsc/Dimitrov11} for a survey). In particular, the {liberal rule}, the class of {consent rules}, the
consensus-start-respecting rule (CSR), and the liberal-start-respecting rule
(LSR) have been axiomatically characterized in the literature~\cite{DBLP:journals/mss/DimitrovSX07,DBLP:journals/mss/Nicolas07,KasherRwhoisj,DBLP:journals/geb/Miller08,DBLP:journals/jet/SametS03}. Under the liberal rule, an individual is socially
qualified if and only if this individual qualifies {{her}self}. Each rule in
the class of consent social rules is characterized by two positive integers $s$ and $t$. Specifically, if an individual qualifies {{her}self}, then this
individual is socially qualified if and only if there are at least $s-1$
other individuals who also qualify {her}. On the other hand, if the
individual disqualifies {{her}self}, then this individual is not socially
qualified if and only if there are at least $t-1$ other individuals who also
disqualify {her}. Finally, the CSR and the LSR social rules recursively
determine the socially qualified individuals. In the beginning, the set $K^{L}$ of individuals each of whom qualifies {{her}self} are considered LSR
socially qualified, while the set $K^{C}$ of individuals each of whom is
qualified by all individuals are considered CSR socially qualified. Then, in
each iteration for the social rule LSR (resp.\ CSR), an individual $a$ is
added to $K^{L}$ (resp.\ $K^{C}$) if there is an individual in $K^{L}$ (resp.\ $K^{C}$) qualifying $a$. The iteration terminates when no new individual
can be added to $K^{L}$ (resp.\ $K^{C}$), and the socially qualified
individuals are the ones in $K^{L}$ (resp.\ $K^{C}$).

In this paper, we consider the problems where an external (strategic) agent
has an incentive to control the results by either adding some individuals ({\gcii}), or deleting some individuals ({\gcei}),
or partitioning the set of individuals ({\gcpi}). In particular, in each problem
the external agent has a subset $S$ of individuals and the goal of the external agent is to make all individuals in $S$
 socially qualified  (see Section~\ref{sec-problem-formulations} for the precise definitions of the {\gcii, \gcei}, and {\gcpi} problems). We study the complexity of these problems for the {liberal rule},
the class of {consent rules}, and the CSR and LSR rules, aiming to show how hard a given problem is. We achieve both
polynomial-time solvability results and {\nphns} results for these
problems. In particular, we obtain dichotomy results for all problems
considered in this paper for consent rules, with respect to the values of $s$
and $t$. In addition, we study the {\nph} problems from the
parameterized complexity point of view, and obtain several fixed-parameter
tractability results, with respect to $|S|$.
See Table~\ref{tab:dichtomyforconsetrules} for a summary of our main findings.

\begin{table*}[h!]
  \caption{A summary of the complexity of the {\gcii}, {\gcei}, and {\gcpi} problems. In the table, ``{\nphshort}'' stands for ``{\nph}'', ``P'' stands for ``polynomial-time solvable'', and ``I'' stands for  ``immune''.
The {\nphns} results with the symbol ``+'' next to them mean that the problems are fixed-parameter tractable ({\fpt}) with respect to $|S|$, where $S$ is the set of individuals the strategic agent wants to make socially qualified (for the precise definition of $S$, see the definitions of {\gcii}, {\gcei}, and {\gcpi} in Section~\ref{sec-problem-formulations}).
}
  \begin{center}
    \begin{tabular}{lllllllllll}\toprule
      &    \multicolumn{6}{l}{{{\conr}s} $f^{(s,t)}$}  &   &&   \\ \cmidrule{2-8}

      &                               \multicolumn{3}{l}{$s=1$}  && \multicolumn{3}{l}{$s\geq 2$} & &$f^{CSR}$& $f^{LSR}$\\ \cmidrule{2-4}\cmidrule{2-4} \cmidrule{6-8}

      & $t=1$ &   $t=2$   &  $t\geq 3$   &&  $t=1$  & $t=2$ & $t\geq 3$ && & \\ \cmidrule{2-4} \cmidrule{6-8} \cmidrule{10-11}

      {\gcii}       & I&  I &  I  &  &  {{\nphshort} {\scriptsize{(+)}}} & {\nphshort} {\scriptsize{(+)}} & {\nphshort} {\scriptsize{(+)}} &&{\nphshort}& {\nphshort} \\

      {\gcei}       & I&  P & {\nphshort} {\scriptsize{(+)}} && I & P & {\nphshort} {\scriptsize{(+)}}&  &P & I \\

      {\gcpi} & I& {\nphshort} & {\nphshort}& &I& {\nphshort}  &{\nphshort} && ? &  I\\ \bottomrule

    \end{tabular}
  \end{center}

  \label{tab:dichtomyforconsetrules}
\end{table*}

To the best of our knowledge, group identification as a classic model for
identifying socially qualified individuals has not been studied from the
complexity point of view~\footnote{After the workshop version of the current paper, Erd{\'{e}}lyi, Reger, and Yang studied the complexity of destructive control, bribery, and possibly/necessarily socially qualified individuals problems in group identification~\cite{AAMAS17ErdelyiRYBriberyControlGroupIdentification,DBLP:conf/aldt/ErdelyiRY17}.}. The words ``control by
adding/deleting/partitioning of'' in the names of the group identification control problems are
reminiscent of many strategic voting problems, such as control by
adding/deleting/partitioning of voters/candidates, which have been extensively
studied in the
literature~\cite{Bartholdi92howhard,%
DBLP:journals/jair/FaliszewskiHHR09,%
handbookofcomsoc2016Cha7FR,%
DBLP:journals/jair/MeirPRZ08,%
Yang2014,%
AAMAS15YangGmultipeak}.
In a voting system, we have a set of candidates
and a set of voters. Each voter casts a vote, and a voting correspondence is
used to select a subset of candidates. From this standpoint, group
identification can be considered as a voting system where the individuals
are both voters and candidates. Nevertheless, group identification differs
from voting systems in many significant aspects. First, the goal of a voting
system is to select a subset of candidates, who are often called winners
since they are considered as more competitive or outstanding compared with
the remaining candidates for some specific purpose. Despite that the goal of
group identification is also to identify a set of individuals (socially
qualified individuals) from the entire set of individuals, it does not imply
that socially qualified individuals are more competitive or outstanding than
the remaining individuals. For instance, in situations where we want to
identify left-wing party members among a group of people, the model of group
identification is more suitable. In other words, group identification is
closer to a classification model. Second, as voting systems aim to select a
subset of competitive candidates for some special purpose, more often than
not, the number of winners is pre-decided (e.g., in a single-winner voting,
exactly one candidate is selected as the winner). As a consequence, many
voting systems need to adopt a certain tie-breaking method to break the tie
when many candidates are considered equally competitive. However, group
identification does not need a tie-breaking method, since there is no
bound on the number of socially qualified individuals.

It is also worth pointing out that the classic voting system Approval, which has been widely studied in the
literature~\cite{baumeisterapproval09,%
Fishburm81,%
DBLP:journals/ai/HemaspaandraHR07,%
handbookapprovalvoting,%
DBLP:conf/icaart/Lin11,%
DBLP:journals/jcss/YangG17}, has the flavor of group identification.
In Approval voting, each voter
approves or disapproves each candidate. Thus, each voter's vote is
represented by a 1-0 vector, where the entries with 1s (resp.\ 0s) mean that
the voter approves (resp.\ disapproves) the corresponding candidate. The
winners are among the candidates which get the most approvals. If the voters
and candidates are the same group of individuals, then it seems that
Approval voting is a social rule. Nevertheless, as discussed above, Approval
voting is more often considered as a single-winner voting system and thus
needs to utilize a tie-breaking method. Recently, several variants of
Approval voting have been studied as multi-winner voting systems. However,
the number of winners is bounded by (or exactly equal to) an integer~(see, e.g.,~\cite{DBLP:conf/atal/AzizGGMMW15,Kilgour2014Marshall}). Moreover, to the best of our knowledge, complexity of control
by adding/deleting/partitioning of voters/candidates has not been studied for
Approval voting when the voters and candidates coincide.

Recently, multiwinner voting where  the number of winners is not fixed has also been studied (see, e.g.,~\cite{DBLP:journals/scw/BergaBMN04,multiwinnerwithvariablewinnersFST2017,Kilgour2016,YangWangIJCAI2018}). However, these rules are completely different from what we study in the paper. Moreover, to the best of our knowledge, to date only the very recent papers~\cite{multiwinnerwithvariablewinnersFST2017} and~\cite{YangWangIJCAI2018} (which appeared after the workshop version of our paper) considered such multiwinner voting from the complexity point of view. However, they mainly considered the winner determination problem while we consider control problems.

The rest of the paper is organized as follows. In Section~\ref{sec_notation} we formally introduce the
studied social rules, the {\nph} problems we make use of in our proofs, as
well as the studied group control problems. Section~\ref{sec_consent_complexity} is then devoted to the
study of the control problems when the corresponding aggregation rule is a
consent rule, while in Section~\ref{sec_procedure_complexity} we study these problems with respect to the
procedural rules. We assume then in Section~\ref{sec_fpt} that the size of the group of
individuals to be made socially qualified is bounded and study the
fixed-parameter tractability of the group control problems. We conclude our work and offer some directions for future research in Section~\ref{sec_conclusion}.

\section{Basic notation and definitions}
\label{sec_notation}
Throughout this paper we will need the
following basic notions and concepts.

\subsection{Social rules}
Let $N$ be a set of individuals. We assume that each
individual $a\in N$ has an opinion about who from the set $N$ possesses a certain
qualification and who does not. For $a^{\prime }\in N$, we write $\varphi (a,a^{\prime })=1$ to denote the fact that $a$ qualifies $a^{\prime }$, and $\varphi (a,a^{\prime })=0$ to denote the fact that $a$ disqualifies $a^{\prime }$. The mapping $\varphi
:N\times N\rightarrow \left\{ 0,1\right\} $ is called a \textit{profile}
over $N$. A \textit{social rule} is a function $f$ assigning a subset~$%
f(\varphi ,T)\subseteq T$ to each pair $(\varphi ,T)$ consisting of a
profile $\varphi $ over $N$ and a subset $T\subseteq N$. We call the
individuals in $f(\varphi ,T)$ the \textit{socially qualified individuals}
of $T$ with respect to $f$ and $\varphi $.\medskip\

In what follows we focus our analysis on the class of consent rules
introduced by Samet and Schmeidler~\citeasnoun{DBLP:journals/jet/SametS03} and on the procedural rules for
group identification axiomatically studied in~\citeasnoun{DBLP:journals/mss/DimitrovSX07}.
\bigskip

{\bf{Consent rules $f^{(s,t)}$}}. Each {\conr} $f^{(s,t)}$ is specified by two positive integers $s$ and $t$ such that
 for every $T\subseteq N$ and every individual $a \in T$,
\begin{enumerate}
\item  if $\varphi(a,a)=1$, then $a \in f^{(s,t)}(\varphi, T)$ if
      and only if $|\{a' \in T \mid \varphi(a', a) =1\}|\geq s$, and
\item if $\varphi(a, a)=0$, then $a \not\in f^{(s,t)}(\varphi, T)$
      if and only if $|\{a' \in T \mid \varphi(a', a) =0\}|\geq t$.
\end{enumerate}

 Notice that if an individual qualifies (disqualifies) herself but lacks the qualification (disqualification) of at least $s-1$ ($t-1$) other individuals, then she will be socially disqualified/qualified. The two positive integers $s$ and $t$ are referred to as the {\it{consent quotas}} of the
  rule $f^{(s,t)}$.
  It is worth mentioning that in the original definition of {\conr}s by Samet and Schmeidler~\citeasnoun{DBLP:journals/jet/SametS03} there is an additional
  condition $s+t\leq n+2$ for consent quotas $s$ and $t$ to satisfy, where $n$ is the number of individuals. Indeed, the condition $s+t\leq n+2$ is crucial for the {\conr}s
  to satisfy the \textit{monotonicity property} requiring a socially qualified individual $a$
  to be still socially qualified when someone who disqualifies $a$ turns
  to qualify $a$. Since our paper is mainly concerned with the
  computational complexity of group control problems, we drop this condition from the definition
  of the {\conr}s (we indeed achieve results for a more general class of social rules that
  encapsulates the original {\conr}s defined in the work of Samet and Schmeidler~\citeasnoun{DBLP:journals/jet/SametS03}). When studying the group control problems for the consent
rules $f^{(s,t)}$ we assume that the consent quotas $s$ and $t$ remain the
same, that is, they do not change after {\including} new individuals, {\excluding}
old ones, or partitioning the set of individuals. Finally, we would like to point out that the consent rule $f^{(1,1)}$ is also referred to as the liberal rule in the literature~\cite{DBLP:journals/jet/SametS03}.

\bigskip

{\bf{Consensus-start-respecting rule $f^{CSR}$}}. For every $T\subseteq N$, this
  rule determines the socially qualified individuals
  iteratively. First, all individuals who are qualified by everyone in the society are considered socially qualified. Then, in each
  iteration, all individuals who are qualified by at least one of
  the currently socially qualified individuals are added to the set of
  socially qualified individuals. The iterations terminate when no new
  individual is added. Formally, for every $T\subseteq N$, let
  \begin{displaymath}
    K_0^C(\varphi, T)=\{a \in T\mid \forall a' \in T,\ \varphi(a', a) =1\}.
  \end{displaymath}
  For each positive integer $\ell=1,2,\dots$, let
  \begin{displaymath}
    K_\ell^C(\varphi, T)=K_{\ell-1}^C(\varphi, T) \cup \{a \in T \mid
    \exists a' \in K_{\ell-1}^C(\varphi, T),\ \varphi(a', a) =1\}.
  \end{displaymath}
  Then $f^{CSR}(\varphi, T)=K_\ell^C(\varphi, T)$ for some $\ell$ such
  that $K_\ell^C(\varphi,T)=K_{\ell-1}^C(\varphi, T)$.

\bigskip
{\bf{Liberal-start-respecting rule $f^{LSR}$}}. This rule
  is similar to $f^{CSR}$ with the only difference that the initial
  socially qualified individuals are those who qualify themselves. In
  particular, for every $T\subseteq N$, let
  \begin{displaymath}
    K_0^L(\varphi, T)=\{a \in T\mid \varphi(a, a)=1\}.
  \end{displaymath}
  For each positive integer $\ell=1,2,\dots$, let
  \begin{displaymath}
    K_\ell^L(\varphi, T)=K_{\ell-1}^L(\varphi, T) \cup \{a \in T \mid
    \exists a' \in K_{\ell-1}^L(\varphi, T),\ \varphi(a', a) =1\}.
  \end{displaymath}
  Then $f^{LSR}(\varphi, T)=K_\ell^L(\varphi, T)$ for some $\ell$ such
  that $K_\ell^L(\varphi, T)=K_{\ell-1}^L(\varphi, T)$.
\bigskip

Clearly, when $K_0^C$ (resp.\ $K_0^L$) is empty we have that $f^{CSR}(\varphi, T)=\emptyset$ (resp.\ $f^{LSR}(\varphi, T)=\emptyset$).

\subsection{Group control}
\label{sec-problem-formulations}
Let us now formally state
the three group control problems we study. In the following, let $f$ be a social rule.

\begin{description}\itemsep=5pt
\item[Group Control by Adding Individuals] ({\gcii})
  \begin{itemize}
  \item[\textit{Input}:] A 5-tuple $(N, \varphi, S, T, k)$
    of  a set $N$ of individuals, a profile~$\varphi$ over
    $N$, two nonempty subsets $S, T \subseteq N$ such
    that~$S \subseteq T$ and $S \not\subseteq f(\varphi, T)$, and a
    positive integer $k$.
  \item[\textit{Question}:] Is there a subset
    $U\subseteq N \setminus T$ such that $|U|\leq k$ and
    $S\subseteq f(\varphi, T \cup U)$?
  \end{itemize}

\item[Group Control by Deleting Individuals] ({\gcei})
  \begin{itemize}
  \item[\textit{Input}:] A 4-tuple $(N, \varphi, S, k)$ of a
     set $N$ of individuals, a profile~$\varphi$
    over $N$, a nonempty subset $S\subseteq N$ such that
    $S \not\subseteq f(\varphi, N)$, and a positive integer~$k$.
  \item[\textit{Question}:] Is there a subset
    $U \subseteq N \setminus S$ such that $|U|\leq k$ and
    $S\subseteq f(\varphi, N \setminus U)$?
  \end{itemize}

\item[Group Control by Partitioning of Individuals] ({\gcpi})
  \begin{itemize}
  \item[\textit{Input}:]
    A 3-tuple $(N, \varphi, S)$ of a set $N$
    of individuals, a profile $\varphi$ over $N$, and a nonempty subset
    $S\subseteq N$ such that $S\not\subseteq f(\varphi, N)$.
  \item[\textit{Question}:] Is there a subset $U \subseteq N$ such that
    $S\subseteq f(\varphi, V)$ where
    \[V =f(\varphi, U)\cup f(\varphi, N \setminus U)?\]
  \end{itemize}
\end{description}

We say that a social rule is \textit{immune} to a control type if it is
impossible to make a socially disqualified individual $a \in S$
socially qualified by carrying out the corresponding operations (i.e., {\including} individuals, {\excluding} individuals, or partitioning the set
of individuals).
If a social rule is not immune to a control type involved in a problem defined above, we say it is {\it{susceptible}} to the control type.

\subsection{Some {\nph} problems}

Our {\nphns} results in this paper are shown by efficient reductions from the following {\nph}
problems: a restricted version of {\sc{Exact Cover by Three-Sets}} (RX3C), {\sc{Labeled Red-Blue
Dominating Set (LRBDS)}}, and {\sc{3-Satisfiability ({\threesat})}}. The formal definitions of these problems are as follows.


\begin{description}
\item [{\sc{Restricted version of Exact Cover by Three-Sets}}] (RX3C)
  \begin{itemize}
  \item[\textit{Input}:] A finite set $X$ with $|X|=3{\mathxxxcssize}$
    for some positive integer $\mathxxxcssize$ and a collection
    $\mathcal C$ of 3-subsets of $X$ such that every $x\in X$ occurs in exactly three 3-subsets in $\mathcal C$. So, it holds that $|\mathcal{C}|=3{\mathxxxcssize}$.
  \item[\textit{Question}:] Is there a subcollection
    $\mathcal{C}'\subseteq \mathcal C$ such that
    $|\mathcal{C}'|={\mathxxxcssize}$ and each $x\in X$ appears in
    exactly one set of~$\mathcal{C}'$?
  \end{itemize}
\end{description}

The {\nphns} of {RX3C} is shown in~\cite{DBLP:journals/tcs/Gonzalez85} (Theorem A.1).

\medskip

In order to
state the second {\nph} problem, we will need the following basic notions
from graph theory. We consider only undirected graphs. A {\it{graph}} is a tuple $(W,E)$ where $W$ is the
{\it{vertex set}} and $E$ is the {\it{edge set}}.
A vertex $v$ {\it{dominates}} a vertex $u$ if there is an edge between $v$ and $u$.
A vertex subset $A$ dominates another vertex subset $B$, if for every
vertex $u\in B$ there is some vertex $v\in A$ that dominates $u$.
An {\it{independent set}} $I$ of a graph is a vertex subset such that there is no edge between each pair of vertices in $I$.
A {\it{bipartite graph}} is a graph
whose vertex set can be partitioned into two independent sets. We denote by $(L\uplus R, E)$ a bipartite graph with $(L,R)$ being a partition of its vertex set such that both $L$ and $R$ are independent sets. We refer to the textbook of West~\cite{Douglas2000} for further details on graphs.

We can now state the second {\nph} problem we will use in the next sections.

\begin{description}
\item[{\sc{Labeled Red-Blue Dominating Set}}] (LRBDS)
  \begin{itemize}
  \item[\textit{Input}:] A bipartite graph $G=(R\uplus B,E)$, where each vertex in $R$ has a label from $\{1,2,\dots,k\}$. For every $i\in \{1,2,\dots,k\}$, let $R_i$ be the set of all vertices in $R$ that have label $i$.
  \item[\textit{Question}:] Is there a subset $W\subseteq R$ such that $|W\cap R_i|\leq 1$ for every $i\in \{1,2,\dots,k\}$ and $W$ dominates $B$?
  \end{itemize}
\end{description}

The following additional notions will be needed as to state the
third {\nph} problem mentioned above. A {\it{Boolean variable}} $x$ takes either the value 1 or 0. Let $X$ be a set of Boolean variables. If $x\in X$, then $x$ and $\bar{x}$ are {\it{literals}} over $X$. A {\it{clause}} $c$ over $X$ is a set of literals over $X$. A {\it{truth assignment}} is a function $\varrho: X\rightarrow \{0,1\}$. A clause $c$ is {\it{satisfied}} under a truth assignment $\varrho$ if and only if there is an $x$ in $c$ such that $\varrho(x)=1$, or a $\bar{x}$ in $c$ such that $\varrho(x)=0$. The {\sc{3-Satisfiability}} problem defined below is a famous {\nph} problem~\cite{DBLP:conf/stoc/Cook71,garey}.

\begin{description}
\item[{\sc{3-Satisfiability}}] (\threesat)
  \begin{itemize}
  \item[\textit{Input}:] A set $X$ of Boolean variables, and a collection $C$ of clauses over $X$ such that each clause includes exactly three literals.
  \item[\textit{Question}:] Is there a truth assignment $\varrho: X\rightarrow \{0,1\}$ under which all clauses in $C$ are satisfied?
  \end{itemize}
\end{description}

\section{Consent rules}
\label{sec_consent_complexity}
We start our analysis by
investigating the group control problems with respect to consent rules.
Section~\ref{subsec_immune_consent_rules} describes the subclass of rules that turn out to be immune to
(some of) these control types. In Sections~\ref{subsec_polynomial_solvable_rules} and~\ref{subsec_NP-hard_rules} we then explore the
computational complexity of group control problems for consent rules that
are susceptible to the corresponding control types.

\subsection{Immune consent rules}
\label{subsec_immune_consent_rules}
The intrinsic property of the consent rule $f^{(1,1)}$ is that it completely
leaves to each individual to determine her own social qualification.
Put it another way, whether an individual is socially qualified
is independent of the opinions of any other individual.
As a consequence, the answers to the question whether an individual is
socially qualified before and after the operations in the corresponding group control problems
are the same, as implied by the following theorem.

\begin{theorem}
  \label{thm:liberalimmune}
  The consent rule $f^{(1,1)}$ is immune to {\gcii}, {\gcei}, and {\gcpi}.
\end{theorem}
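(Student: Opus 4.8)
The key observation is that under $f^{(1,1)}$, an individual's social qualification depends only on her own self-opinion: by the definition of consent rules with $s=t=1$, for any $T\subseteq N$ and any $a\in T$, if $\varphi(a,a)=1$ then $a\in f^{(1,1)}(\varphi,T)$ iff $|\{a'\in T\mid \varphi(a',a)=1\}|\geq 1$, which holds automatically since $a$ itself qualifies $a$; and if $\varphi(a,a)=0$ then $a\notin f^{(1,1)}(\varphi,T)$ iff $|\{a'\in T\mid \varphi(a',a)=0\}|\geq 1$, which again holds automatically. Hence for every $T$ and every $a\in T$ we have the clean characterization $a\in f^{(1,1)}(\varphi,T)\iff\varphi(a,a)=1$. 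The plan is to first establish this equivalence explicitly, since every part of the immunity argument reduces to it.

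With this characterization in hand, immunity to each of the three control types follows directly. For a member $a\in S$ that is socially disqualified in the original instance, the characterization gives $\varphi(a,a)=0$. First I would treat \gcii: for any added set $U\subseteq N\setminus T$, the set $T\cup U$ still contains $a$, and $a\in f^{(1,1)}(\varphi,T\cup U)\iff\varphi(a,a)=0$ is false, so $a$ remains disqualified; thus $S\not\subseteq f^{(1,1)}(\varphi,T\cup U)$ regardless of $U$. Next \gcei: for any $U\subseteq N\setminus S$, we have $a\in N\setminus U$, and again $a\in f^{(1,1)}(\varphi,N\setminus U)\iff\varphi(a,a)=1$ fails, so $a$ stays disqualified. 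Finally \gcpi: $a$ lies in exactly one of the two parts, say in $U$ or in $N\setminus U$; in whichever part contains $a$, the characterization shows $a$ is disqualified in that part's first-round output, so $a\notin f^{(1,1)}(\varphi,U)\cup f^{(1,1)}(\varphi,N\setminus U)=V$. Since $a\notin V$ and $a\notin f^{(1,1)}(\varphi,V)\subseteq V$, member $a$ is not socially qualified after partitioning either.

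I do not expect any genuine obstacle here, as the result is essentially definitional; the only care needed is to phrase each case against the precise problem definitions in Section~\ref{sec-problem-formulations}, in particular checking that the disqualified individual $a\in S$ always survives into the relevant evaluated set (which it does, because in each problem $S$ is contained in every set over which $f$ is finally evaluated). The mild subtlety in the \gcpi case is that $V$ is itself obtained by applying $f^{(1,1)}$ a second time to the union of the two partition outputs, so I would note that $a\notin V$ already rules out $a\in f^{(1,1)}(\varphi,V)$ since the social rule always returns a subset of its argument. Concluding, since a disqualified $a\in S$ can never be made qualified by any of the three operations, $f^{(1,1)}$ is immune to all of \gcii, \gcei, and \gcpi.
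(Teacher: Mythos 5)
Your proof is correct and follows essentially the same route as the paper's: both rest on the observation that under $f^{(1,1)}$ an individual $a$ is socially qualified in any evaluated set if and only if $\varphi(a,a)=1$, so a disqualified $a\in S$ stays disqualified under every operation. Your treatment of the second application of $f^{(1,1)}$ in the {\gcpi} case is a slightly more explicit spelling-out of the paper's blanket claim that $a\notin f^{(1,1)}(\varphi,V)$ for every $V\subseteq N$, but the argument is the same.
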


\begin{proof}
  Consider instances of the {\gcii}, {\gcei}, and {\gcpi} problems with the consent rule $f^{(1,1)}$ as their
  social rule.  Notice that, as an assumption on instances, $S\nsubseteq
f^{(1,1)}(\varphi ,T)$ with $S\subseteq T$ is imposed in {\gcii}, while $%
S\nsubseteq f^{(1,1)}(\varphi ,N)$ is imposed in both {\gcei} and {\gcpi}. Due to the definition of $f^{(1,1)}$, each of the above
  assumptions implies that there exists an individual $a \in S$ such
  that $\varphi(a,a) = 0$, and hence, $a \not\in f^{(1,1)}(\varphi, V)$ for
  every $V \subseteq N$. It follows that
  $S \not\subseteq f^{(1,1)}(\varphi, V)$ for every $S\subseteq V \subseteq N$.
  Therefore, for all instances with $f^{(1,1)}$ as the social rule, the
  answers to {\gcii}, {\gcei}, and {\gcpi} are always ``{\no}''. This
  completes the proof.
\end{proof}

Let us now turn
to consent rules $f^{(s,1)}$ and $f^{(1,t)}$ with $s,t\geq 2$. In order to
change the social status of an individual $a$ from disqualified to qualified
when $f^{(s,1)}$ is applied, one needs the number of supporters of $a$ to
increase, but neither {\excluding} individuals nor partitioning the set of
individuals seems to create additional support. On the other hand, when
the applied rule is $f^{(1,t)}$, the number of people who disqualify $a$
does not decrease when adding individuals. Our next result confirms this
intuition with respect to the corresponding control problems.

\begin{theorem}
  \label{thm:consentimmune}
  Every {\conr} $f^{(s,1)}$ with $s\geq 2$ is immune to {\gcei} and {\gcpi}, and every
  {\conr} $f^{(1,t)}$ with $t\geq 2$ is immune to {\gcii}.
\end{theorem}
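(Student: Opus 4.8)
The plan is to prove each immunity claim by showing that the relevant control operation cannot increase the support (or decrease the opposition) that an individual $a \in S$ needs in order to flip from socially disqualified to socially qualified. For all three rules in the statement the key observation is the \emph{monotone} nature of the quota condition under the permitted operations, so I would isolate, in each case, a fixed ``bad'' individual $a \in S$ whose social status is pinned down no matter what the external agent does.

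First I would handle $f^{(s,1)}$ with $s \geq 2$ under {\gcei} and {\gcpi}. Since $S \not\subseteq f^{(s,1)}(\varphi, N)$, there is some $a \in S$ with $a \notin f^{(s,1)}(\varphi, N)$. I would split on $\varphi(a,a)$. If $\varphi(a,a)=0$, then because $t=1$ the disqualification condition is triggered by $a$ alone (the single individual $a$ already witnesses $|\{a' \mid \varphi(a',a)=0\}| \geq 1 = t$), so $a$ is disqualified in \emph{every} subset containing it, and deleting or partitioning cannot help. If $\varphi(a,a)=1$, then $a \notin f^{(s,1)}(\varphi,N)$ forces $|\{a' \in N \mid \varphi(a',a)=1\}| < s$; the crucial point is that both {\excluding} individuals and partitioning only ever restrict attention to subsets $V \subseteq N$, so the number of supporters of $a$ inside $V$ can only shrink, keeping it below $s$. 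Hence $a \notin f^{(s,1)}(\varphi, V)$ for every relevant $V$, giving a ``{\no}'' answer. For {\gcpi} I would note explicitly that $V = f(\varphi, U) \cup f(\varphi, N\setminus U)$ is itself a subset of $N$, so the same monotonicity applies.

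Next I would treat $f^{(1,t)}$ with $t \geq 2$ under {\gcii}. Here the assumption is $S \subseteq T$ and $S \not\subseteq f^{(1,t)}(\varphi, T)$, so pick $a \in S \setminus f^{(1,t)}(\varphi, T)$. Since $s = 1$, any individual who qualifies herself is automatically socially qualified (the witness $a$ alone gives $|\{a' \mid \varphi(a',a)=1\}| \geq 1 = s$), so $a \notin f^{(1,t)}(\varphi, T)$ forces $\varphi(a,a)=0$ together with $|\{a' \in T \mid \varphi(a',a)=0\}| \geq t$. Adding individuals replaces $T$ by $T \cup U \supseteq T$, and the set of individuals in $T$ who disqualify $a$ is still present, so $|\{a' \in T\cup U \mid \varphi(a',a)=0\}| \geq t$ persists; thus $a$ remains disqualified and the answer is again ``{\no}''.

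I expect the main obstacle to be purely a matter of careful bookkeeping rather than conceptual depth: namely, making sure each case correctly uses the \emph{right} assumption imposed on the instance ($S \subseteq T$, $S \not\subseteq f(\varphi,T)$ for {\gcii} versus $S \not\subseteq f(\varphi,N)$ for {\gcei} and {\gcpi}) and that the partition case is argued through the inclusion $V \subseteq N$. The only genuinely delicate point is verifying that with $t=1$ (resp.\ $s=1$) the self-qualification bit alone determines membership, so that the witness $a$ can be fixed once and for all; once that is established, monotonicity of the relevant count under subset-restriction (for deletion/partition) or under superset-extension (for addition) closes each case immediately.
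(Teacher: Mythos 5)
Your proposal is correct and follows essentially the same route as the paper's proof: the same case split on $\varphi(a,a)$ for $f^{(s,1)}$, the same monotonicity of the support count under restriction to subsets $V\subseteq N$ (covering both {\gcei} and {\gcpi} at once), and the same observation that for $f^{(1,t)}$ the $t$ disqualifiers in $T$ persist in $T\cup U$. The only cosmetic difference is that the paper spells out the subcase $S\not\subseteq V$ via $f(\varphi,V)\subseteq V$, which your phrase ``disqualified in every subset containing it'' implicitly handles.
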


\begin{proof}
We first consider {\conr}s $f^{(s,1)}$ with $s\geq 2$.
  Let $a\in S$ be an individual who is not socially qualified, i.e.,
  $a \not\in f^{(s,1)}(\varphi, N)$. We distinguish between two cases.
  \begin{description}
  \item[Case $\varphi(a,a) =1$:] There are at most $s-1$ individuals
    in $N$ qualifying individual $a$, i.e.,
    $|\{a' \in N \mid \varphi(a', a) =1\}| < s$, and
    thus~$|\{a' \in V \mid \varphi(a', a) =1\}| < s$ for every
    $V \subseteq N$. Therefore, it is impossible to make individual
    $a \in S$ socially qualified by {\excluding} or partitioning the set of individuals.

  \item[Case $\varphi(a, a) = 0$:] By definition, each consent rule $f$
    satisfies $f(\varphi, V) \subseteq V$, and hence,
    $S \not\subseteq f^{(s,1)}(\varphi, V)$ if $S \not\subseteq V$;
    otherwise, when~$S \subseteq V$, we have
    $|\{a' \in V \mid \varphi(a', a) =0\}| \ge 1$ from
    $\varphi(a, a) = 0$, which implies that
    $a \not\in f^{(s,1)}(\varphi, V)$.  Hence, it is impossible to make
    individual $a$ socially qualified by {\excluding} or partitioning the set of individuals,
    i.e., $S \not\subseteq f^{(s,1)}(\varphi, V)$ for any
    $V \subseteq N$.
  \end{description}
  Therefore, for each instance with $f^{(s,1)}$ as
  its social rule, the answers to {\gcei} and {\gcpi} are always
  ``{\no}''.

Consider now the {\conr} $f^{(1,t)}$.
  Let $a\in S$ be an individual which is not socially qualified, that
  is $a\not\in f^{(1,t)}(\varphi, T)$.  This implies that
  $\varphi(a,a)=0$ and, moreover, there are at least $t$
  individuals $a'$ (including $a$) in $T$ such that $\varphi(a',a)=0$.
  Therefore, no matter which individuals the set $U$ includes, there
  will be still at least $t$ individuals $a' \in T\cup U$ such that
  $\varphi(a', a)=0$, implying that $a$ is still not socially
  qualified.
\end{proof}

In the remaining
subsections we show how the interplay between the consent quotas $s$ and $t$
shapes the extent to which the corresponding consent rules are susceptible to
all the three group control types.

\subsection{Polynomial-time solvability}
\label{subsec_polynomial_solvable_rules}
We start with the {\gcei} problem for consent rules $f^{(s,2)}$ with $s\geq 1$. In order to show that
every such rule is not immune to this group control type, we need only to
give an instance where one can make all individuals in $S$ socially
qualified by {\excluding} a limited number of individuals, given that not all
individuals in $S$ are socially qualified in advance. To this end, consider
an instance $\left(N=\left\{ a,b\right\} ,\varphi ,S=\left\{
a\right\} ,k=1\right) $ where $\varphi (a,a)=\varphi (b,a)=0$. It is clear
that one can make $a$ socially qualified by {\excluding} $b$ from the instance.
As our next result reveals, it is practically tractable for a designer to
control a group identification procedure by {\excluding} individuals, provided
that the social rule is $f^{(s,2)}$.

\begin{theorem}
\label{thm:GCEIconsentstwopolynomial}
 The {\gcei} problem for every {\conr} $f^{(s,2)}$ with $s\geq 1$ is polynomial-time solvable.
\end{theorem}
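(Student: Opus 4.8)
The plan is to give a direct polynomial-time algorithm resting on a clean case analysis of which individuals of $S$ can be made socially qualified by deletion, together with the observation that the feasible deletion set is essentially forced. First I would fix an arbitrary individual $a\in S$ and split according to the value of $\varphi(a,a)$, analyzing the requirement $a\in f^{(s,2)}(\varphi, N\setminus U)$ for a candidate deletion set $U\subseteq N\setminus S$.

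In the self-disqualifying case $\varphi(a,a)=0$: since $t=2$, the individual $a$ is socially qualified in $f^{(s,2)}(\varphi, N\setminus U)$ exactly when fewer than two members of $N\setminus U$ disqualify $a$. Because $a\in S$ and $U\subseteq N\setminus S$, the individual $a$ always survives and disqualifies itself, so this condition is equivalent to requiring that \emph{every other} individual disqualifying $a$ be deleted. I would draw two consequences: (i) if some $a'\in S\setminus\{a\}$ satisfies $\varphi(a',a)=0$, the instance is a \noins, since $a'$ cannot be deleted and $a$ therefore always retains at least two disqualifiers; and (ii) otherwise all other disqualifiers of $a$ lie in $N\setminus S$ and must be contained in $U$. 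In the self-qualifying case $\varphi(a,a)=1$: the individual $a$ is socially qualified precisely when at least $s$ members of $N\setminus U$ qualify $a$. As deletion can only remove qualifiers, this constraint is monotone — it only gets harder as $U$ grows — and in particular the instance is a \noins\ whenever $a$ already has fewer than $s$ qualifiers in $N$.

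Combining the two cases, let $U^{\ast}$ be the union, over all self-disqualifying $a\in S$, of the other individuals disqualifying $a$; by consequence (ii) this set is well defined (all such individuals lie in $N\setminus S$) and is contained in every feasible deletion set, so it is \emph{necessary}. For the self-qualifying members of $S$, the monotonicity observation shows that deleting more than $U^{\ast}$ can never help and can only reduce qualifier counts, so $U^{\ast}$ is simultaneously the \emph{least harmful} choice. Hence the unique candidate solution is $U=U^{\ast}$, and I would answer \yes\ if and only if (a) no self-disqualifying individual of $S$ has a disqualifier inside $S$, (b) $|U^{\ast}|\leq k$, and (c) after \excluding\ $U^{\ast}$ every self-qualifying individual of $S$ still retains at least $s$ qualifiers. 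Each of these conditions is checkable in polynomial time by scanning the profile $\varphi$.

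The step I expect to require the most careful argument is the correctness of reducing to the single candidate $U^{\ast}$, specifically the interaction between the two cases: a deletion forced by some self-disqualifying individual may also remove a qualifier of a self-qualifying individual and push it below the threshold $s$. Since such deletions are unavoidable, the formal proof must establish that no alternative feasible set can avoid this damage — that is, that $U^{\ast}$ meets all mandatory requirements while minimizing the loss of qualifiers — which is exactly what the necessity of $U^{\ast}$ and the monotonicity of the self-qualifying constraint together guarantee.
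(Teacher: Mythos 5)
Your proposal is correct and follows essentially the same route as the paper's proof: both identify the forced deletion set consisting of all individuals outside $S$ who disqualify some self-disqualifying member of $S$ (your $U^{\ast}$ is exactly the paper's $U=\bigcup_{a\in\bar L}U_a$), argue it is contained in every feasible solution, and then observe that since further deletions can only hurt the self-qualifying members of $S$, it suffices to test this single candidate. Your conditions (a), (b), (c) are just an explicit unfolding of the paper's check that $|U|\leq k$ and $S\subseteq f^{(s,2)}(\varphi, N\setminus U)$.
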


\begin{proof}
  Let $L = \{a \in S \mid \varphi(a, a) = 1\}$ and
  $\bar L = S \setminus L = \{a \in S \mid \varphi(a, a) = 0\}$.  For
  each~$a\in \bar L$, let~$U_a \subseteq N$ be the set of individuals
  each of whom is outside $S$ and disqualifies $a$, i.e.,
  $U_a= \{ a' \in N\setminus S \mid \varphi(a', a) = 0\}$.  Moreover,
  let $U = \bigcup_{a \in \bar L} U_a$. We develop a polynomial-time algorithm for the problem stated in the theorem as follows: it returns
  ``{\no}'' if
  $S \not\subseteq f^{(s, 2)}(\varphi, N \setminus U)$ or $|U| > k$,
  and otherwise returns ``{\yes}''.

  The correctness of the algorithm is shown based on the following
  observations.  According to the {\conr} $f^{(s,2)}$,
  $a \in \bar L$ is socially qualified if there is no
  individual $a'\neq a$ such that $\varphi(a', a)=0$. Therefore, in
  order to make $a \in \bar L$ socially qualified, all individuals
  $a' \in N\setminus S$ with $\varphi(a', a) =0$ have to be
  deleted. This directly implies that all individuals in~$U$, as
  defined above, have to be deleted.

  Now let us consider $f^{(s,2)} (\varphi, N \setminus U)$. Suppose
  $S \not\subseteq f^{(s,2)} (\varphi, N \setminus U)$, and let
  $a \in S \setminus f^{(s,2)} (\varphi, N \setminus U)$. We distinguish
  between the following two cases.
  \begin{description}
  \item[Case $a \in L$:] According to the {\conr} $f^{(s,2)}$,
    there are at most $s-1$ individuals $a' \in N\setminus U$ such
    that $\varphi(a', a)=1$.  Since {\excluding} individuals does not
    increase the number of individuals who qualify $a$, the
    individual $a$ cannot be socially qualified after deleting some individuals. Thus, the given
    instance is a {\noins}.
  \item[Case $a \in \bar L$:] In this case, there is an individual
    $a' \in S$ such that $a' \neq a$ and $\varphi(a', a) =0$. Since we
    cannot delete individuals in $S$ due to the definition of the
    problem, individual $a$ cannot be socially qualified. Thus, the
    given instance is a {\noins}.
  \end{description}
  Due to the above analysis, if
  $S \not\subseteq f^{(s,2)}(\varphi, N \setminus U)$, we can safely
  return ``{\no}''. Since we are allowed to delete at most $k$
  individuals, and according to the above analysis all individuals in
  $U$ must be deleted, if $|U|>k$, we can safely return ``{\no}''
  too. On the other hand, if
  $S \subseteq f^{(s,2)}(\varphi, N \setminus U)$ and $|U|\leq k$,
  $U$~itself is an evidence for answering ``{\yes}''.

  Finally, observe that the construction of the set $U$, and the decisions
  of $S \subseteq f^{(s,2)}(\varphi, N \setminus U)$ and $|U|\leq k$
  can be done in ${O}(|N|^2)$ time. This completes the proof.
\end{proof}

\subsection{{\nphns}}
\label{subsec_NP-hard_rules}
In contrast to the polynomial-time solvability of
the {\gcei} problem for consent rules $f^{(s,2)}$ with $s\geq 1$, we prove in
this section that the same problem for consent rules with quotas $s\geq 1$
and $t\geq 3$ becomes {\nph}. Theorem~\ref{thm:GCIIconsentnphard} additionally shows that the {\gcii}
problem for consent rules with quotas $s\geq 2$ and $t\geq 1$ is {\nph},
too. It should be also noted that the instances in our {\nphns}
reductions directly imply that every consent rule $f^{(s,t)}$ with $s\geq 2$
and $t\geq 1$ is susceptible to {\gcii}, and every consent rule $f^{(s,t)}$ with $s\geq 1$ and $t\geq 3$ is susceptible to {\gcei}.

\begin{theorem}
  \label{thm:GCIIconsentnphard}
  {\unsurehidden{The}} {\gcii} {\unsurehidden{problem}} for every {\conr} $f^{(s,t)}$ with $s\geq 2$ and $t\geq 1$, and
 {\unsurehidden{the}} {\gcei} {\unsurehidden{problem}} for every {\conr} $f^{(s,t)}$ with $s\geq 1$ and $t\geq 3$ are {\nph}.
\end{theorem}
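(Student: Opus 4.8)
The plan is to prove both claims by separate polynomial-time many-one reductions from \nph{} problems, exploiting the ``threshold counting'' nature of consent rules. For the \gcii{} claim with $s\geq 2$ and $t\geq 1$, I would reduce from {RX3C}. Given an instance $(X,\mathcal{C})$ with $|X|=3\mathxxxcssize$ and $|\mathcal{C}|=3\mathxxxcssize$, the idea is to build a society whose individuals correspond to the set elements $X$ together with gadget individuals, and whose registered (inside-$T$) support of each element falls exactly $1$ short of the quota $s$. The collection sets in $\mathcal{C}$ become the addable individuals in $N\setminus T$: adding the individual for a 3-set $C$ should supply one unit of qualification to exactly the three elements of $C$. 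One then sets $S$ to be the element individuals (forced to qualify themselves so only the $s$-threshold matters) and $k=\mathxxxcssize$. Because each element needs exactly one more supporter and each added set-individual covers three elements, making all of $S$ socially qualified with at most $\mathxxxcssize$ additions corresponds precisely to selecting an exact cover. The regularity guarantee of {RX3C} (each $x$ in exactly three sets, $|\mathcal{C}|=3\mathxxxcssize$) helps me calibrate that one added individual cannot over-satisfy in a way that lets fewer than $\mathxxxcssize$ sets suffice.

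For the \gcei{} claim with $s\geq 1$ and $t\geq 3$, the dual structure suggests reducing again from an exact-cover-type or dominating-set-type problem, now encoding the $t$-threshold for self-disqualifying individuals. Here the individuals in $S$ should each satisfy $\varphi(a,a)=0$ and currently have at least $t$ disqualifiers, so $a\notin f^{(s,t)}(\varphi,N)$; deleting outside individuals must drop each such disqualifier count below $t$. With $t\geq 3$ the count-reduction needed per individual is at least $2$, which is what makes the deletion budget bind nontrivially and forces the covering structure. I would arrange that the deletable individuals (outside $S$) correspond to {RX3C} sets and that deleting the set-individuals for a cover simultaneously brings every element's disqualifier count down to $t-1$; choosing $k=\mathxxxcssize$ then makes deletion of an exact cover both necessary and sufficient. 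The earlier Theorem~\ref{thm:GCEIconsentstwopolynomial} explains why $t=2$ is easy and thus why the argument genuinely needs $t\geq 3$: with $t=2$ a single forced deletion per individual suffices and the greedy set $U$ works, whereas $t\geq 3$ introduces the combinatorial coupling that encodes hardness.

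The main obstacle I anticipate is handling arbitrary $s$ (resp.\ the interaction of $s$ with $t$) uniformly rather than for a single fixed quota. The gadget must be padded so that the self-qualifying/self-disqualifying individuals sit at exactly the right threshold offset for \emph{every} admissible $s\geq 2$ (resp.\ every $t\geq 3$), which typically requires adding dummy individuals whose only role is to inflate baseline support or baseline disqualification by a controlled amount depending on $s$ and $t$. I would introduce, for each element, a fixed pool of ``filler'' qualifiers that always remain in $T$ (and cannot be deleted, e.g.\ by placing them in $S$ or by making them self-qualified and irrelevant) so that the effective shortfall is always a single unit per covering step regardless of the numeric value of the quota. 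Verifying that these fillers never accidentally become part of $S$'s obstruction and that they behave correctly under both addition and deletion is the delicate bookkeeping step.

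The remaining work is routine: I would state the two constructions explicitly, verify the forward direction (an exact cover yields a valid $U$ of size $\mathxxxcssize$) and the backward direction (any valid $U$ of size at most $\mathxxxcssize$ induces an exact cover, using the counting identity $3\cdot\mathxxxcssize = |X|$ to force exactness), and note polynomial bounds on the construction size. I would also record, as remarked before the theorem, that each constructed instance is itself a witness of susceptibility, so no separate susceptibility argument is needed.
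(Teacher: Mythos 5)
Your \gcii{} half is essentially the paper's proof: element individuals in $S=T$ qualifying only themselves, set-individuals of $\mathcal{C}$ as the addable pool each qualifying the three elements it contains, $k=\mathxxxcssize$, and $s-2$ dummy qualifiers placed inside $T$ to calibrate the shortfall to exactly one unit for general $s\geq 2$; the counting argument $3\mathxxxcssize=|X|$ forces exactness in the backward direction just as you describe.

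The \gcei{} half, however, has a genuine calibration error. With the natural RX3C encoding each element individual $a_x$ disqualifies herself and is disqualified by the three set-individuals containing $x$ (plus $t-3$ dummies for general $t$), so $a_x$ starts with $t+1$ disqualifiers and must shed \emph{two} of them to drop below the threshold $t$ --- you even say yourself that the count-reduction needed per individual is at least $2$. But you then propose $k=\mathxxxcssize$ and claim that \emph{deleting the set-individuals of a cover} brings each count down to $t-1$. Deleting an exact cover removes only one disqualifier per element (each $x$ lies in exactly one cover set), leaving the count at $t$, which still disqualifies $a_x$; and no budget of $\mathxxxcssize$ deletions can supply the required $2\cdot 3\mathxxxcssize$ disqualifier-removals when each deleted set-individual removes at most three. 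The paper resolves this by inverting your choice: set $k=2\mathxxxcssize$ and delete the \emph{complement} of an exact cover, so that each element retains exactly one of its three set-disqualifiers (count $=2<3=t$ in the base case), with the regularity $|\mathcal{C}|=3\mathxxxcssize$ and the bound $|U|\leq 2\mathxxxcssize$ forcing the surviving $\mathxxxcssize$ sets to form an exact cover. Your padding idea for $t>3$ (dummies placed in $S$ so they cannot be deleted, made robustly qualified) is the right instinct and matches the paper, but as stated your reduction for the base case $t=3$ does not go through.
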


\begin{proof}
  We prove the theorem by reductions from the RX3C problem.  Let's
  first consider the {\gcii} problem for {\conr}s $f^{(2,t)}$ with $t\geq 1$.
  Given an instance $\mathcal{I}=(X,\mathcal{C})$ of RX3C
  with $|X| = 3\mathxxxcssize$, we create an instance
  ${\mathgcaiins}=(N, \varphi, S, T, k)$ of {\gcii} for $f^{(s,t)}$ as follows.

  There are $|X| + |\mathcal{C}|$ individuals in
  $N = \{a_x \mid x \in X\}\cup \{a_c \mid c \in \mathcal{C}\}$. The
  first $|X|$ individuals $\{a_x \mid x \in X\}$ one-to-one correspond
  to the elements in $X$, and the last $|\mathcal{C}|$ individuals
  $\{a_c \mid c \in \mathcal{C}\}$ one-to-one correspond to elements
  in $\mathcal{C}$.  We define $S =T = \{a_x \in N \mid x \in X\}$.
  In addition, we set $k=\mathxxxcssize$.  Now we define the profile
  $\varphi$.
  \begin{enumerate}
  \item For each $x, x' \in X$,
    $\varphi(a_x,a_{x'})=1$ if and only if $x=x'$.
  \item For each $x \in X$
    and for each $c \in \mathcal{C}$, $\varphi(a_c, a_x)=1$ if and
    only if $x\in c$.
  \item For each $c, c' \in \mathcal{C}$,
    $\varphi(a_c,a_{c'}) = 0$.
  \end{enumerate}
  For the proof, the values of $\varphi(a_x, a_{c})$ where $x\in X$ and $c\in \mathcal{C}$ are not
  essential. Obviously, the construction of ${\mathgcaiins}$ can be
  done in polynomial time.

Now we prove the correctness of the reduction, i.e., we show that
${\mathxxxcins}$ is a {\yesins} of RX3C if and only if
${\mathgcaiins}$ is a {\yesins} of {\gcii}.

$(\Rightarrow:)$ Suppose ${\mathxxxcins}$ is a {\yesins} for RX3C,
  and let $\mathcal{C}'\subseteq \mathcal{C}$ be an exact 3-set cover, i.e., $|\mathcal{C}'|=\mathxxxcssize$ and for every
  $x\in X$ there exists a $c\in \mathcal{C}'$ such that~$x\in c$.
  Let $U=\{a_c \in N \mid c\in \mathcal{C}'\}$.  Then, according to the definition of $\varphi$, for each
  $a_x \in S$, there exists an $a_c\in U$ such that
  $\varphi(a_c, a_x) = 1$. Moreover, each $a_x \in S$ qualifies
  herself (i.e., $\varphi(a_x,a_x) = 1$). Therefore, according to
  the definition of the {\conr} $f^{(2,t)}$,
  $a_x\in f^{(2,t)}(\varphi, T\cup U)$ for every $a_x\in S$, i.e.,
  $S \subseteq f^{(2,t)}(\varphi, T\cup U)$.  By definition, we have
  $|U| = |\mathcal{C}'|=k = \mathxxxcssize$.  Therefore,
  ${\mathgcaiins}$ is a {\yesins} for
  {\gcii}.

$(\Leftarrow:)$ Suppose ${\mathgcaiins}$ is a {\yesins} for {\gcii}, and let $U\subseteq N \setminus T$ be a set of
  individuals such that $|U|\leq k = \mathxxxcssize$ and
  $S \subseteq f^{(2,t)}(\varphi, T\cup U)$.  From
  $S \subseteq f^{(2,t)}(\varphi, T\cup U)$ and, for all
  $a_x, a_{x'} \in S = T$, $\varphi(a_x,a_{x'})=1$ if and only if
  $x = x'$, it follows that, for each $a_x\in S$, there is an
  $a_c\in U$ such that $\varphi(a_c, a_x)=1$.  Then, according to the
  definition of the profile $\varphi$, for each $x \in X$, there
  exists $c\in \mathcal{C}$ such that $a_c\in U$ and
  $x \in c$.  This implies that
  $\mathcal{C}' = \{c \in \mathcal{C} \mid a_{c}\in U\}$ is
  an exact 3-set cover of $\mathcal{I}$. Thus, $\mathcal{I}$ is
  a {\yesins}.

The {\nphns} reduction for the problem {\gcii} for any {\conr} $f^{(s,t)}$ with $s>2$ and $t\geq 1$ can be
adapted from the above reduction.  Precisely, we introduce
further $s-2$ dummy individuals in $T$, and let all these dummy
individuals qualify every individual in $S=\{a_x\in N\mid x\in X\}$.
The opinions of a dummy individual over any other individual in $N$
and the other way around do not matter in the proof, and thus can be set arbitrarily.
Now for each individual $a_x\in S$, there are exactly $s-1$ individuals in $T$ who qualify
$a_x$.  Moreover, in order to make each $a_x\in S$ socially
qualified, we need one more individual in $N\setminus T$ who qualifies $a_x$.
\bigskip

Now let's consider {\unsurehidden{the}} {\gcei} {\unsurehidden{problem}} for {\conr}s $f^{(s,t)}$ with $s\geq 1$ and $t\geq 3$.  We first
consider the case $t=3$.  The reduction for this problem is similar to the
above reduction for {\unsurehidden{the}} {\gcii} {\unsurehidden{problem}} for {\conr}s $f^{(2,t)}$ with $t\geq 1$ with the following differences.

\begin{enumerate}
\item There is no $T$ in this reduction; 
  but keeping $S=\{a_x\in N\mid x\in X\}$.

\item The values of $\varphi(a,b)$ for every $a,b\in N$
  are reversed. That is, we have $\varphi(a,b)=1$ in the current reduction
  if and only if $\varphi(a,b)=0$ in the above reduction for {\gcii}.

\item $k=2{\mathxxxcssize}$.
\end{enumerate}

Now we prove the correctness
of the reduction.

$(\Rightarrow:)$ Suppose that there is an exact 3-set cover
$\mathcal{C}'\subset \mathcal{C}$ for ${\mathxxxcins}$, i.e., $|\mathcal{C}'|=\mathxxxcssize$ and for every
  $x\in X$ there exists exactly one $c\in \mathcal{C}'$ such that~$x\in c$. Let
$U=\{a_c \mid c\in \mathcal{C}\setminus
\mathcal{C}'\}$ and $U'=\{a_c \mid c\in
\mathcal{C}'\}$.  Clearly, $S\cap U=\emptyset$. Moreover,
$N\setminus U=S\cup U'$.
Let $a_x$ be an individual in $S$ where $x\in X$.
Then, according to the construction, there is exactly one
$a_c\in U'$ such that $\varphi(a_c,a_x)=0$. Since $\varphi(a_{x'},a_{x})=1$ for all $a_{x'}\in S\setminus \{a_x\}$,
according to the {\conr} $f^{(s,3)}$, $a_x\in f^{(s,3)}(\varphi,N\setminus U)$.
Since this holds for every $a_x\in S$, we can conclude that $S\subseteq f^{(s,3)}(\varphi,N\setminus U)$.

$(\Leftarrow:)$ Suppose that there is a
$U\subseteq N\setminus S$ such that $|U|\leq 2\mathxxxcssize$
and \[S\subseteq f^{(s,3)}(\varphi,N\setminus U).\] Let $U'=N\setminus
(S\cup U)$ and $\mathcal{C}'=\{c\in \mathcal{C}\mid a_c\in U'\}$.
Thus, $N\setminus U=S\cup U'$. Due to the fact that $\varphi(a_x,a_x)=0$ for
every $a_x\in S$ where $x\in X$ and the definition of $\varphi$, it holds that for every $a_x\in S$, there
is at most one $a_c\in U'$ such that $\varphi(a_c,a_x)=0$ and $x\in c$.
Due to the construction, every individual $a_c\in U'$ disqualifies exactly 3 individuals in $S$. Then, from $|S|=3\mathxxxcssize$ it follows that $|U'|\leq \mathxxxcssize$.
In addition, from $|U|\leq 2\mathxxxcssize$, we obtain that $|U'|=3\mathxxxcssize-|U|\geq \mathxxxcssize$. Hence, it must be that $|U'|=k$. It follows that  every individual $a_x$ is disqualified by exactly one individual $a_c\in U'$ such that $x\in X, c\in \mathcal{C}'$, and $x\in c$. This implies that the subcollection  $\mathcal{C}'$ corresponding to $U'$ is an exact 3-set cover of $\mathcal{I}$.

The proof of {\nphns} of the problem for any {\conr} $f^{(s,t)}$ with $s\geq 1$ and $t>3$ can be adapted from the above
reduction by introducing some dummy individuals. In particular, we
introduce further $t-3$ individuals in $S$. Let $S'$ denote the set
of the $t-3$ dummy individuals. Thus, $S=\{a_x\in N\mid x\in X\}\cup S'$.
We want each dummy individual in $S'$ to be a robust socially qualified individual, that is,
every $d\in S'$ is socially qualified regardless of which
individuals (at most $k=2\mathxxxcssize$) would be deleted. To this end, for every
$d\in S'$, we let $d$ disqualify herself, and let all the other individuals
qualify $d$.
We set $\varphi(d,a_x)=0$ for every
$d\in S'$ and $a_x$ where $x\in X$. Thus, for every $a_x\in S$ where $x\in X$,
there are in total $t+1$ individuals in $N$ who disqualify $a_x$.
The other entries in the profile
not defined above can be set arbitrarily.
In order to make each $a_x\in S$
where $x\in X$ socially qualified, we need to
delete exactly two individuals in $N\setminus S$ who disqualify
$a_x$. This happens if and only if there is an exact 3-set cover for $\mathcal{I}$, as we
discussed in the proof for the {\conr} $f^{(s,3)}$.
\end{proof}

Even though consent rules $f^{(s,t)}$ with $s\geq 2$ (resp.\ $f^{(s,t)}$ with $t\geq 3$) are susceptible to {\gcii} (resp.\ {\gcei}),
Theorem~\ref{thm:GCIIconsentnphard} reveals that it is a computationally hard task for a designer to successfully control a group
identification procedure in these cases by {\including} (resp.\ {\excluding}) individuals.

Let us now turn
to the {\gcpi} problem. We have shown in Theorems~\ref{thm:liberalimmune} and~\ref{thm:consentimmune} that every consent rule
$f^{(s,1)}$ with $s\geq 1$ is immune to this control type. In order to
show that consent rules with $t>1$ are susceptible to {\gcpi},
consider an instance $\left(N,\varphi ,S\right) $ where $t\geq 2$%
, $N=\left\{ a_{1},a_{2},\ldots ,a_{t+1}\right\} $, $\varphi (a_{i},a_{j})=0$
for every $i,j\in \left\{ 1,2,\ldots ,t+1\right\} $, i.e., everyone disqualifies everyone, and $S=\left\{
a_{1}\right\} $. Clearly, no individual is socially qualified, i.e., $f^{(s,t)}(\varphi ,N)=\emptyset $. Consider now
the partition $(U=\{a_1\},N\setminus \{a_1\})$ of $N$ and note that $f^{(s,t)}(\varphi
,U)=S$. Moreover, for every individual $a_{i}\in N\setminus U$, at least $t$
individuals in $N\setminus U$ disqualify $a_{i}$ and thus $f^{(s,t)}(\varphi ,N\setminus U)=\emptyset$. We have then $S=\left\{
a_{1}\right\} =f^{(s,t)}(\varphi ,f^{(s,t)}(\varphi ,U)\cup
f^{(s,t)}(\varphi ,N\setminus U))$, showing the susceptibility of the consent
rule to {\gcpi}. Our next result reveals that, in fact,
manipulation by partitioning the set of individuals is {\nph}, provided
that the social rule is $f^{(s,t)}$ with $t\geq 2$.

\begin{theorem}
\label{thm_GCPI_consent_s_t_NP_Hard}
{\unsurehidden{The}} {\gcpi} {\unsurehidden{problem}} for every consent rule $f^{(s,t)}$ with $s\geq 1$ and $t\geq 2$ is {\nph}.
\end{theorem}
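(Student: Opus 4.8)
The plan is to prove NP-hardness by a polynomial-time many-one reduction from RX3C, in the spirit of the reduction behind Theorem~\ref{thm:GCIIconsentnphard}. I would first settle the base case $t=2$ and then lift it to every $t\geq 3$ by the padding device used at the end of that proof: adjoin $t-2$ ``constant'' dummy individuals, each of whom disqualifies every member of $S$ but disqualifies only herself while being qualified by everyone else. Such a dummy carries exactly one disqualifier in any block of any partition and again in the merged set, so it is socially qualified no matter how $N$ is split and thus robustly survives into $V$. The net effect is that each $a_x\in S$ drags along $t-2$ unavoidable disqualifiers in $V$, so that in the final round $a_x$ is socially qualified iff at most one further (set-)individual disqualifying her also survives into $V$, which is precisely the slack available in the $t=2$ construction.

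For the base case, given an RX3C instance $(X,\mathcal C)$ with $|X|=3\kappa$, I would introduce an individual $a_x$ for each $x\in X$ and an individual $a_c$ for each $c\in\mathcal C$, put $S=\{a_x\mid x\in X\}$, and let every member of $S$ disqualify herself. This makes the social status of each $a_x$ governed entirely by the $t$-branch of $f^{(s,t)}$ and hence independent of $s$, which is what lets a single construction cover all $s\geq 1$. Membership is encoded by $\varphi(a_c,a_x)=0$ iff $x\in c$. Each $a_x$ then starts with strictly more than $t-1$ disqualifiers, so $S\nsubseteq f^{(s,t)}(\varphi,N)$ and the instance is admissible for {\gcpi}.

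The heart of the reduction is to use the partition \emph{together with} the second application $f^{(s,t)}(\varphi,f^{(s,t)}(\varphi,U)\cup f^{(s,t)}(\varphi,N\setminus U))$ to decide which set-individuals ultimately land in $V$ and therefore keep disqualifying elements in the final round. Here I must defeat the naive partition that simply isolates all of $S$ in one block; to that end the set-individuals have to be wired (for instance by mutual disqualifications, or by disqualifications between sets sharing an element) so that whichever set-individuals survive round~$1$ on the non-$S$ side re-enter $V$ and, in round~$2$, re-impose their disqualifications on the corresponding $a_x$. The intended correspondence is then that a partition succeeds iff the family of set-individuals surviving into $V$ covers all of $X$ while hitting each $x$ at most the permitted number of times, i.e.\ an exact cover. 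In the forward direction I would route the $\kappa$ cover-sets and $S$ into one block and the remaining $2\kappa$ sets into the other, and verify that each $a_x$ then sees only its unique covering disqualifier (plus the dummies), hence at most $t-1$ disqualifiers, in both rounds.

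The step I expect to be the main obstacle is soundness: showing that \emph{no} cleverly chosen $U$ can make all of $S$ socially qualified unless it encodes an exact cover. The difficulty is caused precisely by the two-stage evaluation, since I must rule out that the solver abuses the re-application either to smuggle in helpful survivors or to purge disqualifiers more aggressively than an exact cover would allow. I would control this by a counting argument that pins down how many set-individuals can survive each block and re-enter $V$, together with a profile guarantee that every element not covered by a surviving set retains at least $t$ disqualifiers in $V$; combined with $|X|=3\kappa$ and the RX3C regularity that each $x$ lies in exactly three sets, this should force the surviving sets to form a subcollection of size $\kappa$ covering $X$ exactly, completing the reverse direction.
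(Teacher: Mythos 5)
There is a genuine gap, and it starts with the arithmetic of the base case. Under $f^{(s,2)}$, an individual $a_x$ with $\varphi(a_x,a_x)=0$ is socially qualified in a set $T$ iff \emph{no other} individual of $T$ disqualifies her (she already contributes one of the $t=2$ tolerated disqualifications herself). So in your forward direction, where each $a_x\in S$ ``sees only its unique covering disqualifier (plus the dummies)'', she in fact sees herself plus that covering set-individual, i.e.\ $t$ disqualifiers, and is \emph{eliminated}; the same off-by-one recurs in your lifting claim that the slack is ``at most one further'' disqualifier (it is zero). More importantly, this shows your intended correspondence is inverted: a set-individual that disqualifies $a_x$ and survives into $V$ can only hurt $a_x$, never certify coverage, because in the $t$-branch only the absence of disqualifiers matters. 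A working invariant must therefore be that every set-individual disqualifying some $a_x$ is killed \emph{before} the final round, not that the surviving sets cover $X$.

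Once the correspondence is set up that way, the soundness step you flag as ``the main obstacle'' is exactly where an RX3C reduction runs into structural trouble: {\gcpi} has no budget $k$, so a two-block partition gives you no counting mechanism to force that exactly $\kappa$ of the $3\kappa$ sets are ``selected'' (contrast this with the {\gcei} reduction in Theorem~\ref{thm:GCIIconsentnphard}, where $k=2\kappa$ does that work). The paper avoids this entirely by reducing from {\threesat}: the two blocks of the partition naturally encode a binary truth assignment via pairs $a(x,1),a(x,2)$ that cannot cohabit (since $a(x,2)$ disqualifies $a(x,1)\in S$), and a clause individual $a(c)$ is eliminated in the first round iff the clause is satisfied, while a global individual $a(C)\in S$ forces all clause individuals into the block opposite to it. Your proposal leaves the crucial wiring of the set-individuals and the counting argument unspecified, and given the inverted invariant above I do not see how to complete it along the proposed lines; I would recommend switching the source problem to {\threesat} as the paper does. (Your dummy device for lifting $t=2$ to $t\geq 3$ --- persistent disqualifiers of $S$ that are disqualified only by themselves and are \emph{not} placed in $S$ --- is, by itself, a sound and in fact simpler lift than the paper's two-group construction $A_1\cup A_2$, whose extra complexity comes from putting the dummies into $S$.)
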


\begin{proof}
We prove the {\nphns} of the problem stated in the theorem by a reduction from the {\threesat} problem. We first consider {\unsurehidden{the}} {\gcpi} {\unsurehidden{problem}} for {\conr}s $f^{(s,2)}$ with $s\geq 1$. Later, we extend the reduction to all consent rules $f^{(s,t)}$ with $s\geq 1$ and $t\geq 3$.

Let $(X,C)$ be an instance of the {\threesat} problem, where $X$ is the set of Boolean variables and $C$ is the set of clauses each consisting of three literals. Moreover, let $m$ and $n$ be the numbers of variables and clauses, respectively, i.e., $m=|X|$ and $n=|C|$. We construct an instance $\mathcal{E}=(N,\varphi, S)$ of {\gcpi} for $f^{(s,2)}$ as follows.

There are in total $%
2m+n+1$ individuals in $N$. In particular, for each variable $x\in X$ and
each clause $c\in C$, we create the individuals $a(x,1)$, $a(x,2)$, and $%
a(c) $, respectively. Moreover, we create one individual $a(C)$ for $C$. We
set $S=\left\{ a(x,1)\mid x\in X\right\} \cup \left\{ a(C)\right\} $ and
define the profile $\varphi $ as follows.

\begin{enumerate}
\item For each $a\in N$, $\varphi(a,a)=0$.

\item For each $x\in X$, $\varphi(a(x,2),a(x,1))=0$.

\item For each $c\in C$, $\varphi(a(c),a(C))=0$.

\item For each $x\in X$, $\varphi(a(C),a(x,2))=0$.

\item For each $x\in X$ and $c\in C$, $\varphi(a(c),a(x,2))=0$.

\item For each $c\in C$ and every variable $x$ involved in $c$, $\varphi(a(x,2),a(c))=0$ if $x\in c$ and $\varphi(a(x,1),a(c))=0$ if $\bar{x}\in c$.

\item For every two $a,a'\in N$ such that $\varphi(a,a')$ is not defined above, $\varphi(a,a')=1$.
\end{enumerate}

\begin{table}
\caption{This table summarizes, for each individual $a\in N$, the set of individuals qualifying $a$ and the set of individuals disqualifying $a$, according to the profile $\varphi$ in the proof of Theorem~\ref{thm_GCPI_consent_s_t_NP_Hard}.}
\begin{center}
\begin{tabular}{lll}\toprule
         &   qualified by                            &  \large{disqualified by} \\ \midrule

{$a(x,1)$}   &   $N\setminus \{a(x,1),a(x,2)\}$          &  \large{$a(x,1),a(x,2)$} \\ \midrule

\multirow{2}{*}{$a(x,2)$}   &    \multirow{2}{*}{$N\setminus (\{a(x,2),a(C)\}\cup \{a(c)\mid c\in C\})$}       &   \large{$a(x,2), a(C),$ and} \\
                     &                     &  \large{$a(c)$ for each $c\in C$}\\ \midrule

\multirow{2}{*}{$a(C)$}   &    \multirow{2}{*}{$N\setminus (\{a(c)\mid c\in C\}\cup \{a(C)\})$}           &   \large{$a(C)$ and} \\
 & &  \large{$a(c)$ for every $c\in C$}\\ \midrule

\multirow{3}{*}{$a(c)$}   &   \multirow{3}{*}{$N\setminus (\{a(x,2)\mid x\in c\} \cup \{a(x,1)\mid \bar{x}\in c\}\cup \{a(c)\})$}                     &   \large{$a(c),$}\\
                        &                         &  \large{$a(x,2)$ for every $x\in c,$} \\
                        &                        &  \large{$a(x,1)$ for every $\bar{x}\in c$}\\ \bottomrule
\end{tabular}
\end{center}
\label{table_GCPI_consent}
\end{table}

Now we prove that $(X,C)$ is a {\yesins} if and only if $\mathcal{E}$ is a {\yesins}. Table~\ref{table_GCPI_consent} is helpful for the reader to check the following arguments.

$(\Rightarrow:)$ Assume that there is a truth assignment $\varrho: X\rightarrow \{0,1\}$. Then, we find a $U\subseteq N$ as follows. First, in $U$ we include the individual $a(C)$ and exclude all individuals in $\{a(c)\mid c\in C\}$, i.e., $a(C)\in U$ and $\{a(c)\mid c\in C\}\subseteq N\setminus U$. In addition, for each $x\in X$, $U$ includes exactly one of $\{a(x,1),a(x,2)\}$, depending on the value of $\varrho(x)$. In particular, for every $x\in X$, $a(x,1)\in U$ and $a(x,2)\in N\setminus U$ if $\varrho(x)=1$; and $a(x,2)\in U$ and $a(x,1)\in N\setminus U$; otherwise. Now let's consider the subprofiles $f^{(s,2)}(\varphi, U)$ and $f^{(s,2)}(\varphi, N\setminus U)$. Since the only individual in $U$ who disqualifies $a(C)$ is $a(C)$ herself, it holds that $a(C)\in f^{(s,2)}(\varphi,U)$. Let $x$ be any variable in $X$. Due to the above definition of $U$, the individuals $a(x,2)$
and $a(x,1)$ are not included in the same element of the partition $
(U,N\setminus U)$ of $N$. Since the only individuals who disqualify $a(x,1)$ are $a(x,1)$ and $a(x,2)$, it holds that $a(x,1)$ survives the first stage of selection, i.e., $a(x,1)\in f^{(s,2)}(\varphi,U)$ if $a(x,1)\in U$ and $a(x,1)\in f^{(s,2)}(\varphi, N\setminus U)$ if $a(x,1)\in N\setminus U$. On the other hand, since $a(C)$ and all individuals in $\{a(c)\mid c\in C\}$ disqualify every individual in $\{a(x,2)\mid x\in X\}$, none of $\{a(x,2)\mid x\in X\}$ survives the first stage of selection, i.e., for every $x\in X$ it holds that $a(x,2)\not\in f^{(s,2)}(\varphi,W)$ where $W\in \{U,N\setminus U\}$ and $a(x,2)\in W$. Now we consider the individuals corresponding to the clauses. Let $c\in C$ be a clause. Since $c$ is satisfied under $\varrho$, there is either an $x\in c$ such that $\varrho(x)=1$, or a $\bar{x}\in c$ such that $\varrho(x)=0$. In the former case, we have $a(x,2)\in N\setminus U, \varphi(a(x,2),a(c))=0$, and in the latter case we have
$a(x,1)\in N\setminus U, \varphi(a(x,1),a(c))=0$. Hence, both cases lead $a(c)$ to be eliminated in the first stage of selection, i.e., $\{a(c)\mid c\in C\}\cap f^{(s,2)}(\varphi,N\setminus U)=\emptyset$. As a summary, $f^{(s,2)}(\varphi,U)\cup f^{(s,2)}(\varphi,N\setminus U)=S$. Since every individual in $S$ is only disqualified by herself, i.e., $\varphi(a',a)=0$ if and only if $a'=a$ for every $a,a'\in S$, we have that $S=f^{(s,2)}(\varphi,S)$. This completes the proof of this direction.

$(\Leftarrow:)$ Suppose that there is a $U\subseteq N$ such that $S\subseteq f^{(s,2)}(\varphi,f^{(s,2)}(\varphi,U)\cup f^{(s,2)}(\varphi,N\setminus U))$. Due to symmetry, assume that $a(C)\in U$. Since for every $c\in C$, $\varphi(a(c),a(C))=0$, and $a(C)\in S$, it holds that $\{a(c)\mid c\in C\}\subseteq N\setminus U$ (otherwise, $a(C)$ would be eliminated in the subprofile restricted to $U$). Moreover, it holds that $\{a(c)\mid c\in C\}\cap f^{(s,2)}(\varphi,N\setminus U)=\emptyset$. As a result, for every $a(c)$, except herself, there must be at least one other individual in $ N\setminus U$ who disqualifies $a(c)$. Due to the definition of the profile, this means that there is either an $x\in c$ such that $a(x,2)\in N\setminus U$, or a $\bar{x}\in c$ such that $a(x,1)\in N\setminus U$. Since for every $x\in X$ it holds $\varphi(a(x,2),a(x,1))=\varphi(a(x,1),a(x,1))=0$, exactly one of $\{a(x,1),a(x,2)\}$ can be in $N\setminus U$ (otherwise, $a(x,1)$ would be eliminated in the first stage of selection). Hence, given $U$, we can uniquely define a truth assignment $\varrho$ as follows. For every $x\in X$, define $\varrho(x)=1$ if $a(x,2)\in N\setminus U$ and $a(x)=0$ otherwise. Then, due to the above discussion, for every $c\in C$, there is either an $x\in c$ such that $\varrho(x)=1$ or a $\bar{x}\in c$ such that $\varrho(x)=0$. Therefore, every clause is satisfied under the truth assignment $\varrho$. This completes the proof of this direction.

Now, we explain how to extend the above reduction for each consent rule $f^{(s,t)}$ with $t\geq 3$. Assume that $|C|\geq t-1$ (if this is not the case, we can duplicate any arbitrary clause to make the inequality hold). In addition to the individuals defined in the above reduction, we further create $2t-4$ dummy individuals $a_1^1,\dots,a_1^{t-2}, a_2^1,\dots,a_2^{t-2}$, and include all individuals $a_1^1,\dots,a_1^{t-2}$ in $S$. Let $A_1=\{a_1^1,\dots,a_1^{t-2}\}$ and $A_2=\{a_2^1,\dots,a_2^{t-2}\}$. So we now have $S=\{a(x,1)\mid x\in X\}\cup a(C)\cup A_1$, and \[N=S\cup A_2\cup \{a(x,2)\mid x\in X\}\cup \{a(c)\mid c\in C\}.\] Every individual in $A_1\cup A_2$ is disqualified by all individuals in $A_1\cup A_2$. In addition, all individuals in $N\setminus (A_1\cup A_2)$ disqualify every individual in $A_2$, and all individuals in $N\setminus (A_1\cup A_2\cup \{a(c)\mid c\in C\})$ qualify every individual in $A_1$. Furthermore, all individuals in $\{a(c)\mid c\in C\}$ disqualify all individuals in $A_1$. Finally, all individuals in $A_1\cup A_2$ disqualify all individuals not in $A_1\cup A_2$. The subprofile restricted to individuals not in $A_1\cup A_2$ remains unchanged. Observe that, by defining so, to make all individuals in $A_1$ socially qualified, it has to be the case that for every solution $U$ the number of dummy individuals included in $U$ and $ N\setminus U$ should be the same (and equal to $t-2$). Assume for the sake of contradiction that this is not the case. Let $U\subseteq N$ be a solution. Due to symmetry, we assume that $|U\cap (A_1\cup A_2)|\geq t-1$. Apparently, $U$ includes at least one individual $a_1^i\in S\cap A_1$. Moreover, all individuals in $\{a(x,1)\mid x\in X\}\cup \{a(C)\}$, all of which are in $S$, must be partitioned into the set $N\setminus U$, since otherwise all of them will be eliminated in the first stage of selection (i.e., for every $a\in \{a(x,1)\mid x\in X\}\cup \{a(C)\}$ it holds that $a\not\in f^{(s,t)}(\varphi,W)$ where $W\in \{U,N\setminus U\}$ and $a\in W$). As a result, at most $t-2$ individuals in $\{a(c)\mid c\in C\}$ can be included in $N\setminus U$, since otherwise the individual $a(C)$ will be eliminated, i.e., $a(C)\not\in f^{(s,t)}(\varphi,N\setminus U)$. As $|C|\geq t-1$, there will be at least one individual $a(c)$ with $c\in C$ in the set $U$. However, the individual $a(c)$ together with all other individuals in $U\cap (A_1\cup A_2)$ will make $a_1^i$ be eliminated, contradicting that $U$ is a solution. The observation follows. The discussion for the observation also implies that for every solution $U\subseteq N$, either $A_1\subseteq U$ or $A_1\subseteq N\setminus U$, i.e., all individuals in $A_1$ must be included in the same element of the partition $(U,N\setminus U)$ of $N$.
Now, one can check that from every solution $U$ of the instance constructed above for the consent rule $f^{(s,2)}$, we can get a solution for the instance constructed for the consent rule $f^{(s,t)}$ with $s\geq 1$ and $t\geq 3$ by {\including} all individuals $A_1$ in $U$ or $N\setminus U$ (if $a(C)\in U$ then $A_1\subseteq U$; otherwise $A_1\subseteq N\setminus U$), and vice versa. This completes the proof.
\end{proof}

\section{Procedural rules}\label{sec_procedure_complexity}
Each of the procedural rules for group identification introduced in Section~\ref{sec_notation} expands an initial set of socially qualified individuals by adding new individuals. Hence, the very definition of these rules invites us to start the analysis of whether they are susceptible to {\gcii}. We answer the question in the affirmative. As a matter of fact, we show that {\gcii} for $f^{LSR}$ and {\gcii} for $f^{CSR}$ are both {\nph}, which directly implies the susceptibility of the two procedural rules to {\gcii}.

\begin{theorem}\label{thm:GCIICSRnphard}
The {\gcii} problem for $f^{LSR}$ and the {\gcii} problem for $f^{CSR}$ are {\nph}.
\end{theorem}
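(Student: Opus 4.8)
The plan is to reduce from the LRBDS problem, exploiting the fact that, if we read the profile $\varphi$ as the arc set of a directed graph on $N$ (placing an arc from $a$ to $a'$ whenever $\varphi(a,a')=1$), then both $f^{CSR}$ and $f^{LSR}$ compute exactly the set of individuals reachable, inside the given subset, from their respective seed sets --- the individuals qualified by everyone for $K_0^C$, and the self-qualifying individuals for $K_0^L$. Adding individuals in {\gcii} then corresponds to switching on new sources or bridges in this reachability graph, and the goal is to encode ``choosing at most one red vertex per label so as to dominate $B$'' as ``adding at most $k$ individuals so as to reach all of $S$''. Given an LRBDS instance $G=(R\uplus B,E)$ with labels $1,\dots,k$, I would introduce one individual $a_b$ for every blue vertex $b\in B$ (all placed in $S$, and hence in $T$) and one addable individual $a_r\in N\setminus T$ for every red vertex $r\in R$, and translate each edge $\edge{r}{b}$ into $\varphi(a_r,a_b)=1$.

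First I would take care of the seeds, which is where $f^{CSR}$ and $f^{LSR}$ differ. For $f^{LSR}$ I would set $\varphi(a_r,a_r)=1$ for every red individual and $\varphi(a_b,a_b)=0$ for every blue individual, so that each added $a_r$ is a source reaching precisely the blue individuals it dominates, while no member of $S$ lies in $K_0^L$; since the red individuals are absent from $T$, we get $f^{LSR}(\varphi,T)=\emptyset$, and the input requirement $S\not\subseteq f^{LSR}(\varphi,T)$ is met. For $f^{CSR}$ the seed is the set of individuals qualified by everyone, which cannot be conjured from self-loops; here I would add a single root individual $z\in T$ with $\varphi(a,z)=1$ for all $a\in N$ (so that $z$ stays in $K_0^C$ no matter which individuals are added) and let $z$ qualify every red individual, so that adding $a_r$ immediately pulls it into the socially qualified set, whence it in turn reaches its dominated blue individuals. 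In both constructions I would ensure that no member of $S$ is qualified by everyone and that each has no incoming arc from within $T$, so that $S$ is socially disqualified before any addition.

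The device that makes LRBDS (rather than plain dominating set) the right source, and the step I would spend the most care on, is the encoding of the one-per-label constraint together with a tight budget. For each label $i\in\{1,\dots,k\}$ I would add a \emph{checker} individual $q_i$ to $S$ whose only qualifiers (via the arcs above) are the red individuals of label $i$, and I would set $k$ as the budget. Covering the $k$ checkers then forces at least one addition of each label, and the budget forces exactly one, so that any solution $U$ restricts to a choice of one red vertex per label; conversely, a labelled dominating set of $B$ can always be padded with arbitrary reds of the unused labels to hit every checker while staying within budget and respecting the ``$\le 1$ per label'' condition. With this correspondence in place, the two directions of the equivalence between an LRBDS {\yesins} and a {\gcii} {\yesins} are routine: a labelled dominating set yields an addition reaching all of $B$ and all checkers, and any successful addition of at most $k$ individuals restricts to one red per label that must dominate $B$.

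I expect the main obstacle to be the $f^{CSR}$ case, because there the seed $K_0^C$ is sensitive to additions: inserting an individual can \emph{remove} someone from ``qualified by everyone''. The reduction must guarantee that the root $z$ survives every admissible addition (handled by making every individual qualify $z$) and, crucially, that adding red individuals never creates an unintended path reaching a blue individual or a checker for free; verifying that the only way to reach $q_i$ is through a label-$i$ red, and that no member of $S$ ever enters $K_0^C$, is the delicate bookkeeping. The analogous check for $f^{LSR}$ is lighter, since its seed depends only on self-loops and is immune to additions, so I would present the $f^{LSR}$ reduction first and then adapt it to $f^{CSR}$ via the root gadget.
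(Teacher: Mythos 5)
Your proposal is correct in substance but takes a genuinely different route from the paper. The paper reduces from RX3C, reusing the individual set $N=\{a_x\mid x\in X\}\cup\{a_c\mid c\in\mathcal{C}\}$ with $S=T=\{a_x\mid x\in X\}$ and budget $k=\kappa$ from Theorem~\ref{thm:GCIIconsentnphard}: for $f^{LSR}$ each $a_c$ qualifies only herself and the $a_x$ with $x\in c$, while for $f^{CSR}$ every $a_c$ is qualified by \emph{everyone}, so that any added $a_c$ lands directly in $K_0^C(\varphi,T\cup U)$; exactness of the cover then falls out of the tight budget ($3\kappa$ elements, sets of size $3$, $k=\kappa$). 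You instead reduce from LRBDS and handle the $f^{CSR}$ seed via a persistent root $z$ qualified by everyone --- a sound and arguably more robust gadget, since it isolates the one genuinely delicate point you correctly identify, namely that $K_0^C$ can shrink under additions. Your underlying reachability picture (both rules compute reachability from their seed sets in the qualification digraph) is exactly the paper's implicit reasoning. Two remarks: first, your checker gadget is more machinery than the theorem needs --- {\gcii} already carries a cardinality budget, so a reduction from plain {\sc Red-Blue Dominating Set} (or the paper's RX3C) suffices, and the one-per-label enforcement buys you nothing here; second, if some label class $R_i$ is empty your checker $q_i$ is unsatisfiable while the LRBDS constraint for that label is vacuous, so you must assume without loss of generality that every $R_i$ is nonempty (which holds for the instances produced in the paper's appendix). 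Neither point undermines the argument; the remaining "delicate bookkeeping" you defer (no unintended arcs into $S$, no member of $S$ in the seed set) is routine to discharge with the profile you describe.
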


\begin{proof}
  We prove the theorem by reductions from the RX3C problem.  Let's
  first consider the social rule $f^{LSR}$. Given an instance
  $\mathxxxcins=(X, \mathcal{C})$ of RX3C with $|X|=3\kappa$, we create an
  instance ${\mathgcaiins}=(N,\varphi,S,T,k)$ of {\gcii} for $f^{LSR}$ as follows.

  The definitions of $N,S,T$ and $k$ are the same as in
  the {\nphns} reduction for {\gcii} for {\conr}s $f^{(2,t)}$ with $t\geq 1$ in
  Theorem~\ref{thm:GCIIconsentnphard}. That is, $N=\{a_x \mid x \in X\}\cup \{a_c \mid c \in \mathcal{C}\}$ is a set of $|X| + |\mathcal{C}|$ individuals, with the first $|X|$ individuals $\{a_x \mid x \in X\}$ one-to-one corresponding
  to the elements in $X$, and the last $|\mathcal{C}|$ individuals
  $\{a_c \mid c \in \mathcal{C}\}$ one-to-one corresponding to elements
  in $\mathcal{C}$.  In addition, $S =T = \{a_x \in N \mid x \in X\}$, and $k=\mathxxxcssize$.
The profile $\varphi$ is defined as follows.

  \begin{enumerate}
  \item For each $x,x'\in X$, $\varphi(a_x,a_{x'})=0$.

  \item For each $x\in X$ and each $c\in \mathcal{C}$, $\varphi(a_x,a_c)=0$.

  \item For each $c,c'\in \mathcal{C}$, $\varphi(a_c,a_{c'})=1$ if and only if $c=c'$.

  \item For each $x\in X$ and each $c\in \mathcal{C}$, $\varphi(a_c,a_x)=1$ if and only if $x\in c$.
  \end{enumerate}

 Now we prove the correctness of the reduction.

  $(\Rightarrow:)$ Suppose that there is an exact 3-set cover
$\mathcal{C}'\subset \mathcal{C}$ for ${\mathxxxcins}$, i.e., $|\mathcal{C}'|=k$ and for every
  $x\in X$ there exists exactly one $c\in \mathcal{C}'$ such that~$x\in c$. Let
$U=\{a_c \mid c\in \mathcal{C}'\}$.
  According to the definition of $\varphi$, it holds that
  $U\subseteq f^{LSR}(\varphi, T\cup U)$.
  Moreover, for every $a_x\in S$ with $x\in X$,
  there is an $a_c\in U$ such that $\varphi(a_c,a_x)=1$ and $x\in c$.
  Since $U\subseteq f^{LSR}(\varphi,T\cup U)$, according to the
  definition of the social rule $f^{LSR}$, it holds that
  $a_x\in f^{LSR}(\varphi, T\cup U)$ for every $a_x\in S$.
  Therefore, ${\mathgcaiins}$ is a {\yesins} since
  it has a solution $U$.

  $(\Leftarrow:)$ Suppose that there is a $U\subseteq N\setminus T$ such
  that $|U|\leq k$ and $S=T\subseteq f^{LSR}(\varphi, T\cup
  U)$. Let $\mathcal{C}'=\{c\in \mathcal{C}\mid a_c\in U\}$.
  According to the definition of $\varphi$,
  $f^{LSR}(\varphi,T)=\emptyset$. Moreover, every $a_x\in S$ with $x\in X$
  disqualifies all individuals in $N$, and every
  $a_c\in N\setminus T$ qualifies herself.
  As a result, for every $a_x\in S$ with $x\in X$, there must be at least one
  $a_c\in U$ with $c\in \mathcal{C}'$ such that
  $\varphi(a_c,a_x)=1$. According to the definition of
  $\varphi$, this implies that for every $x\in X$, there is
  at least one $c\in \mathcal{C}'$ such that $x\in c$. Since $|\mathcal{C}'|=|U|\leq k=\kappa$,
  this implies that $|\mathcal{C}'|=k$ and, more precisely,
  $\mathcal{C}'$ is an exact 3-set cover of ${\mathxxxcins}$.

  \bigskip

  Now let's consider {\unsurehidden{the}} {\gcii} {\unsurehidden{problem}} for $f^{CSR}$.
  Again, the definitions of $N,S,T$ and $k$ are the same as in
  the {\nphns} reduction for {\unsurehidden{the}} {\gcii} {\unsurehidden{problem}} for {\conr}s $f^{(2,t)}$ with $t\geq 1$ in
  Theorem~\ref{thm:GCIIconsentnphard}. The profile $\varphi$ is
  defined as follows.

  \begin{enumerate}
  \item For each $x,x'\in X$, $\varphi(a_x,a_{x'})=0$.

  \item For each $x\in X$ and each $c\in \mathcal{C}$, $\varphi(a_x,a_c)=1$.

  \item For each $c,c'\in \mathcal{C}$, $\varphi(a_c,a_{c'})=1$.

  \item For each $x\in X$ and each $c\in \mathcal{C}$, $\varphi(a_c,a_x)=1$ if and only if $x\in c$.
  \end{enumerate}

  Now we prove the correctness of the reduction.

  $(\Rightarrow:)$ Suppose that there is a
$\mathcal{C}'\subset \mathcal{C}$ such that $|\mathcal{C}'|=k$ and for every
  $x\in X$ there exists exactly one $c\in \mathcal{C}'$ such that~$x\in c$. Let
$U=\{a_c \mid c\in \mathcal{C}'\}$. Clearly, $|U|=|\mathcal{C}'|=k$.
 Observe that $U\subseteq f^{CSR}(\varphi,T\cup U)$.
  Then, according to the definition of $\varphi$,
  it holds that for every $a_x\in S$ with $x\in X$, there is an $a_c\in U$
  such that $x\in c$ and $\varphi(a_c,a_x)=1$. This implies that
  $a_x\in f^{CSR}(\varphi,T\cup U)$ for every $a_x\in S$.
  Thus, ${\mathgcaiins}$ is a {\yesins} since it has a solution $U$.

  $(\Leftarrow:)$ Suppose that there is a subset $U\subseteq N\setminus T$ such
  that $|U|\leq k$ and $S=T\subseteq f^{CSR}(\varphi,T\cup U)$.
  Let $\mathcal{C}'=\{c\in \mathcal{C}\mid a_c\in U\}$.
  According to the definition of $\varphi$,
  $f^{CSR}(\varphi,T)=\emptyset$. Moreover, every individual in $S$
  disqualifies every individual in $S$. Furthermore, every individual
  in $N\setminus T$ is qualified by all individuals in $N$.
  Therefore, for every $a_x\in S$ with $x\in X$, there must be at least one
  $a_c\in U$ such that $\varphi(a_c,a_x)=1$. According to the
  definition of $\varphi$, this implies that for every
  $x\in X$ there is at least one $c\in \mathcal{C}'$ such that
  $x\in c$.  Since $|\mathcal{C}'|=|U|\leq k=\kappa$,
  this implies that $|\mathcal{C}'|=k$ and, more precisely,
  $\mathcal{C}'$ is an exact 3-set cover of ${\mathxxxcins}$.
\end{proof}

In contrast to the susceptibility of the procedural rules to {\gcii}, we show next that $f^{LSR}$ turns out to be immune to the other two group control types. Intuitively, an individual is not socially qualified if there are not enough individuals qualifying her. Hence, deleting some individuals cannot make such an individual socially qualified; it in fact can only make the situation worse for the  individual.

\begin{theorem}
  \label{thm:GCEIGCPICSRLSRimmune}
  The social rule $f^{LSR}$ is immune to {\gcei} and {\gcpi}.
\end{theorem}

\begin{proof}
  According to the definition of $f^{LSR}$, an individual $a\in  N$
  is a socially qualified with respect to $f^{LSR}$
  if and only if
  there is a sequence of individuals $a_0,a_1,\dots,a_t$ such that
  (1) $a_t=a$;
  (2) $\varphi(a_0,a_0)=1$, i.e., $a_0$ is in the initial set of socially qualified individuals; and
  (3) $\varphi(a_i, a_{i+1})=1$ for every $0\leq i\leq t-1$.
Clearly, if there is no such path for an individual $a$, then after deleting some individuals such a path still does not exist for $a$. Hence, $f^{LSR}$ is immune to {\gcei}. Moreover, for any partition $N_1$ and $N_2$ of $N$ such a path also does not exist in the subprofile restricted to both $N_1$ and $N_2$, if it does not exist in the overall profile. Hence, $f^{LSR}$ is immune to {\gcpi} too.
\end{proof}

In contrast to the immunity of $f^{LSR}$ to {\gcei} and {\gcpi}, we show that $f^{CSR}$ is susceptible to {\gcei} and {\gcpi}. Consider the instance where there are three individuals $a,b,c$ and $S=\{a,b\}$. In addition, $\varphi(a,a)=\varphi(a,b)=\varphi(b,b)=\varphi(b,a)=1$ and $\varphi(c,a)=\varphi(c,b)=\varphi(c,c)=0$. Then, initially no one is qualified by all individuals. Hence, there are no socially qualified individuals. However, after deleting $c$ both $a$ and $b$ become socially qualified individuals. In addition, if we partition the individuals as $(\{a,b\},\{c\})$, both $a$ and $b$ become socially qualified individuals too. Next, we prove that {\gcei} for $f^{CSR}$ is polynomial-time solvable.

\begin{theorem}
\label{thm-liberal-rule-polynomial-ccdi}
{\gcei} for $f^{CSR}$ is polynomial time solvable.
\end{theorem}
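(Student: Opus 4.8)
The plan is to recast the rule $f^{CSR}$ as reachability in a directed graph and then show that, although deletion is non-monotone, every useful deletion is forced to have a very restricted shape. Model the profile as a digraph $D$ on $N$ with an arc from $a'$ to $a$ whenever $\varphi(a',a)=1$. Writing $Z(a)=\{a'\in N\mid \varphi(a',a)=0\}$ for the set of individuals disqualifying $a$, one checks directly from the definition that, for any $U$, the seed set is $K_0^C(\varphi,N\setminus U)=\{a\notin U\mid Z(a)\subseteq U\}$, and that $f^{CSR}(\varphi,N\setminus U)$ is exactly the set of individuals reachable in $D[N\setminus U]$ from this seed set. The essential point to record is that, for a \emph{fixed} seed set, reachability is monotone: deleting fewer individuals never destroys a reaching path. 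Hence the only way deletion can ever help is by turning new individuals into seeds, which is what makes the problem non-monotone.

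The key structural step I would prove is a normalization lemma: if the instance is a \yesins, then there is a solution of the special form $U=\bigcup_{b\in B}Z(b)$, where $B$ is a set of \emph{eligible seeds}, i.e.\ individuals $b$ with $\varphi(b,b)=1$ and $Z(b)\cap S=\emptyset$. To see this, take any solution $U$ and let $B$ be the seeds in $N\setminus U$ that actually begin the reaching paths used to cover $S$; since each $b\in B$ is qualified by everyone present and $S\subseteq N\setminus U$, no member of $S$ disqualifies $b$, so $Z(b)\cap S=\emptyset$ and $b$ is eligible. Put $U'=\bigcup_{b\in B}Z(b)$. Then $U'\subseteq U$ (each $Z(b)\subseteq U$ because $b$ is a seed), so $|U'|\le|U|\le k$ and $U'\cap S=\emptyset$; every $b\in B$ is still a seed under $U'$ (its disqualifiers are deleted and, being mutually qualifying, no $b'\in B$ disqualifies $b$); and since $N\setminus U\subseteq N\setminus U'$, the reaching paths survive. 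Thus $S\subseteq f^{CSR}(\varphi,N\setminus U')$, proving the lemma. The upshot is that one never needs to delete anything beyond the disqualifier-sets of a chosen family of eligible seeds.

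Armed with the lemma, I would design the polynomial algorithm around selecting seeds rather than arbitrary deletions. Because reachability from a fixed seed set only grows as we delete less, the task reduces to choosing eligible seeds so as to reach all of $S$ while keeping $\bigl|\bigcup_{b\in B}Z(b)\bigr|\le k$. I would compute this by a fixed-point / greedy growth of the achievable socially qualified set: repeatedly, an individual of $S$ that cannot yet be reached is handled either by making it (or a suitable qualifying predecessor) an eligible seed at the forced cost $|Z(\cdot)|$, or is declared unreachable; after each commitment the reachable closure in $D$ restricted to the surviving individuals is recomputed in $O(|N|^2)$ time. The normalization lemma guarantees correctness: since deletions are confined to disqualifier-sets of eligible seeds and deleting less never hurts reachability, the minimum total deletion needed to cover $S$ is well defined and attained by such a selection, and we answer \yes{} iff that minimum is at most $k$.

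The main obstacle will be precisely the interaction that makes the problem non-monotone: deleting $Z(b)$ to create a seed $b$ may remove an individual that lies on a path needed to reach some other member of $S$. The normalization lemma is what tames this, by showing we may always restrict to deletions of the form $\bigcup_{b\in B}Z(b)$ and never delete path-internal individuals ``for free''; once this is in place, the residual optimization over eligible seeds becomes monotone and is solvable in polynomial time. The remaining care is in verifying that members of $S$ disqualified by another member of $S$ (which can never be seeds) are correctly treated as reachable only through qualifying predecessors, and that the whole selection/closure loop runs in polynomial time.
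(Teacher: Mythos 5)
Your reduction to reachability and your normalization lemma are both correct, and the lemma is essentially the first half of the paper's argument (the paper likewise observes that $\bigcup_{b'\in K_0^{C}(\varphi,N\setminus U)}D(b')\subseteq U$ for any solution $U$). The gap is in the algorithmic step that follows. Your proposed ``fixed-point / greedy growth'' over families of eligible seeds is not a specified polynomial-time procedure: for an unreached member of $S$ there may be many candidate predecessors to seed, the objective is to keep the \emph{union} $\bigl|\bigcup_{b\in B}Z(b)\bigr|$ within budget $k$ (a set-cover-flavoured constraint, not a sum of independent forced costs), and you give no argument that a greedy choice of which predecessor to seed is ever safe. As stated, correctness and polynomial running time of the selection loop are both unestablished, and you yourself flag the problematic interaction without resolving it.

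The missing idea, which collapses the whole selection problem, is already implicit in your own lemma: the seeds in $K_0^{C}(\varphi,N\setminus U)$ mutually qualify one another, and a seed $b$ qualifies (hence, via the iteration, re-generates) every other original seed. Therefore a \emph{single} seed always suffices: if $U$ is a solution and $b\in K_0^{C}(\varphi,N\setminus U)$, then $Z(b)\subseteq U$ is itself a solution, because $b$ becomes a seed under $N\setminus Z(b)$, all of $K_0^{C}(\varphi,N\setminus U)$ is then reached from $b$ in one step, and $N\setminus U\subseteq N\setminus Z(b)$ preserves all reaching paths. This reduces the algorithm to: for each $a\in N$, test whether $|Z(a)|\le k$ and $S\subseteq f^{CSR}(\varphi,N\setminus Z(a))$, which is plainly polynomial. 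Without this single-seed observation (or an equivalently concrete replacement for your seed-selection step), your proof is incomplete.
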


\begin{proof}
To prove the theorem, we develop a polynomial-time algorithm. For an individual $a\in N$, let $D(a)$ be the set of individuals disqualifying $a$, i.e., $D(a)=\{a'\in N \mid \varphi(a',a)=0\}$. The algorithm is as follows: return ``{\yes}'' if and only if there is an individual $a\in N$ such that $|D(a)|\leq k$ and $S\subseteq f^{CSR}(\varphi, N\setminus D(a))$. Clearly, the algorithm can be implemented in polynomial time. It remains to prove its correctness. Obviously, if the algorithm returns ``{\yes}'', the given instance must be a {\yesins}. Assume now that the given instance is a {\yesins}. Let $U\subseteq N\setminus S$ be a solution of the given instance, i.e., $|U|\leq k$ and $S\subseteq f^{CSR}(\varphi, N\setminus U)$.
Let $K_0^{CSR}$ be the initial set of socially qualified individuals with respect to $\varphi$ and $N\setminus U$.
Clearly, $\bigcup_{b'\in K_0^{CSR}}D(b')\subseteq U$.
Let $b$ be any arbitrary individual in $K_0^{CSR}$. We claim that $D(b)$ is also a solution. To check this, first observe that all individuals in $K_0^{CSR}$ qualify all individuals in $K_0^{CSR}$, i.e., $\varphi(c,c')=1$ for all $c,c'\in K_0^{CSR}$.
Moreover, according to the definition of $f^{CSR}$, all individuals in $K_0^{CSR}$ are socially qualified with respect to $\varphi$ and $N\setminus D(b)$ (hint: $b$ is socially qualified with respect to $N\setminus D(b)$ and $b$ qualifies everyone in $K_0^{CSR}$ as discussed above). This implies that all individuals in $S$ are socially qualified with respect to $\varphi$ and $N\setminus D(b)$. As $D(b)\subseteq U$ and $|U|\leq k$, it holds that $|D(b)|\leq k$. Hence, the individual $b$ is a witness (i.e., $b$ is the individual $a$ as described in the algorithm) that the algorithm returns ``{\yes}''. This completes the proof.
\end{proof}

\section{Bounded group size and parameterized complexity}
\label{sec_fpt}
We have shown in Theorems~\ref{thm:GCIIconsentnphard} and~\ref{thm_GCPI_consent_s_t_NP_Hard} that the {\gcii}, {\gcei}, and {\gcpi} problems for {\conr}s $f^{(s,t)}$ are {\nph} when either $s$ or $t$ exceeds some constant. Hence, there are no polynomial-time algorithms for these problems unless {\poly}={\np}.
In this section, we investigate how the size of the group $S$ of people to be made
socially qualified affects the complexity of the problems studied in the previous section.
In particular, we study the {\gcii, \gcei}, and {\gcpi} problems from the parameterized complexity point of view, with respect to the size of $S$.

Parameterized complexity was introduced by Downey and Fellows~\citeasnoun{fellows99} as a tool to deal with hard problems.
A {\it{parameterized problem}} is a language contained in $\Sigma^*\times\Sigma^*$, where $\Sigma$ is a finite alphabet.
The first component is called the {\it{main part}} of the problem and the second component is called the {\it{parameter}}.
In this paper, we consider only positive integer parameters.
A parameterized problem is {\it{fixed-parameter tractable}} ({\fpt}) if it is solvable in $O(f(k)\cdot|I|^{O(1)})$ time, where $I$ is the main part of the instance, $k$ is the parameter, and $f(k)$ is a computable function depending only on $k$. For further discussion on parameterized complexity, we refer to~\citeasnoun{DBLP:books/sp/CyganFKLMPPS15,DBLP:series/txcs/DowneyF13,rolf06}.

We first study the {\gcii} and the {\gcei} problems for consent rules. In particular,
we prove that both the {\gcii} and {\gcei} problems for these rules are {\fpt} with
respect to the size of $S$.
To this end, we give integer linear programming (ILP)
formulations with the number of variables bounded by $2^{|S|}$ for
both problems.
As ILP is {\fpt} with respect to the number of
variables~\cite{Frank1987,kannan87a,lenstra83}, so are the {\gcii} and {\gcei} problems for the consent rules.

\begin{lemma}\cite{Frank1987,kannan87a,lenstra83}
  \label{lenstra} ILP can be solved using
  $O(v^{2.5v+\small{o}(v)}\cdot L)$ arithmetic operations, where $L$ is the number of bits in the input and
  $v$ is the number of variables in ILP.
\end{lemma}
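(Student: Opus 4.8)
The plan is to treat Lemma~\ref{lenstra} as the classical theorem on integer linear programming in a bounded number of variables --- Lenstra's algorithm, with the sharpened running time of Kannan and the coefficient preprocessing of Frank and Tardos --- and to reconstruct its proof rather than merely invoke the citation. First I would reduce the optimization version to a feasibility test: deciding whether the set $\{x\in\mathbb{Z}^v : Ax\leq b\}$ is nonempty. A standard binary search over the value of the linear objective turns an optimizer into $O(L)$ feasibility queries, which is exactly what produces the linear dependence on the input bit-length $L$ in the stated bound.

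The heart of the argument is the feasibility test, carried out by Lenstra's dimension-reduction scheme. Given the rational polyhedron $P=\{x : Ax\leq b\}$, I would first compute (for instance by the shallow-cut ellipsoid method) an ellipsoidal rounding, i.e.\ an ellipsoid $E$ with $E\subseteq P\subseteq \gamma E$ for some $\gamma$ depending only on $v$. I would then apply lattice basis reduction to $\mathbb{Z}^v$ under the inner product induced by $E$, obtaining a short reduced basis. Khinchine's flatness theorem then forces a clean dichotomy: either the rounded body already contains an integer point, in which case we are done, or there is a nonzero dual-lattice vector $d$ such that the functional $x\mapsto \langle d,x\rangle$ takes only boundedly many integer values on $P$, where the bound depends on $v$ alone.

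In this \emph{flat} case I would branch on the value of $\langle d,x\rangle$: for each of the boundedly many admissible integers $w$, I intersect $P$ with the hyperplane $\langle d,x\rangle=w$ and recurse on the resulting $(v-1)$-dimensional integer feasibility problem. The recursion tree has depth $v$ and a branching factor bounded by a function of $v$, so the number of leaves is $v^{O(v)}$, and the remaining work is to bound this product sharply. I expect this quantitative bookkeeping to be the main obstacle: extracting the precise exponent $2.5v+o(v)$ rather than a crude $v^{O(v)}$ requires Kannan's refined selection of the flatness direction from the reduced basis together with a tighter count of the lattice hyperplanes meeting $P$ at each recursion level, while the Frank--Tardos preprocessing is what guarantees that all intermediate arithmetic stays controlled in terms of the original $L$. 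By contrast, the ellipsoidal rounding, the basis reduction, and the recursion itself are routine once the flatness estimate is in hand.
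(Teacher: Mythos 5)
The paper does not prove this lemma at all: it is stated as an imported result, with the proof delegated entirely to the cited works of Lenstra, Kannan, and Frank--Tardos, and it is then used as a black box in the proof of Theorem~\ref{thm:GCIIGCEIconsentrulefpt}. So there is no internal argument to compare yours against; the only question is whether your reconstruction actually establishes the stated bound.

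Your outline is the correct skeleton of the standard argument---reducing optimization to feasibility by binary search, ellipsoidal rounding of the polyhedron, lattice basis reduction, the flatness dichotomy, and branching on the boundedly many lattice hyperplanes orthogonal to a flat direction. But, as you concede yourself, this machinery only delivers a bound of the form $v^{O(v)}\cdot\mathrm{poly}(L)$; the specific exponent $2.5v+o(v)$ in the statement \emph{is} the content of Kannan's refinement, and you do not supply it. Without the sharper bound on the number of candidate hyperplanes per recursion level, the constant in the exponent remains unproven, and that constant is the only thing distinguishing this lemma from the generic Lenstra bound. There is a second quantitative point you pass over: a binary search contributes a multiplicative factor of $O(L)$ on top of whatever each feasibility test costs, so arriving at a bound in which $L$ appears only \emph{linearly} in the total operation count requires the Frank--Tardos coefficient reduction in an essential, load-bearing way, not merely to ``keep intermediate arithmetic controlled.'' None of this affects the paper, which never opens the box, but as a proof of the precise statement your sketch stops exactly where the stated bound begins.
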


Let us now describe the ILP formulations for both the {\gcii} problem
and the {\gcei} problem for {\conr}s.

\begin{theorem}
\label{thm:GCIIGCEIconsentrulefpt}
  The {\gcii} and {\gcei} problems for every {\conr} $f^{(s,t)}$ are {\fpt} with respect to the size
  of $S$.
\end{theorem}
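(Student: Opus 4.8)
The plan is to encode both control problems as integer linear programs whose number of variables depends only on $|S|$, and then invoke Lemma~\ref{lenstra} to conclude fixed-parameter tractability. The key idea is that individuals outside $S$ matter only through how they interact with $S$. Specifically, for each individual $a \in N \setminus S$, the relevant information is the pair of subsets $(P_a, Q_a)$ where $P_a = \{b \in S \mid \varphi(a,b) = 1\}$ and $Q_a = \{b \in S \mid \varphi(a,b) = 0\} = S \setminus P_a$; equivalently, we track which members of $S$ this individual qualifies. There are at most $2^{|S|}$ distinct ``types'' of individuals according to this signature. First I would partition $N \setminus S$ (or, in the adding case, the pool $N \setminus T$ of addable individuals) into these at most $2^{|S|}$ types, and for each type $\tau$ introduce an integer variable $y_\tau$ counting how many individuals of type $\tau$ are added (resp.\ not deleted). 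This immediately bounds the number of variables by $2^{|S|}$, which is the function of the parameter we need.

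Next I would write the constraints. For the {\gcei} problem, the goal is $S \subseteq f^{(s,t)}(\varphi, N \setminus U)$. For each $a \in S$ with $\varphi(a,a)=1$, the rule requires at least $s$ qualifiers surviving; for each $a \in S$ with $\varphi(a,a)=0$, the rule requires at most $t-1$ disqualifiers surviving. Crucially, qualifiers/disqualifiers of $a$ come from two sources: the (bounded) other members of $S$, which are never deleted and whose contribution is a fixed constant, plus the retained individuals of each type, whose contribution to $a$'s count is determined by whether that type qualifies or disqualifies $a$. Each such requirement is therefore a linear inequality in the variables $y_\tau$ (the number of retained individuals of a qualifying/disqualifying type, summed over relevant types, compared against $s$ or $t-1$). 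The deletion budget $\sum_\tau (n_\tau - y_\tau) \le k$, where $n_\tau$ is the total number of type-$\tau$ individuals available, gives one more linear constraint, together with $0 \le y_\tau \le n_\tau$. The {\gcii} case is symmetric: variables count added individuals per type, the $S$-internal and $T \setminus S$ contributions are constants, and we add the budget $\sum_\tau y_\tau \le k$.

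The main technical care — and the step I expect to require the most attention — is making the linear inequalities faithfully capture the threshold semantics of $f^{(s,t)}$ for members of $S$, since the definition distinguishes the self-qualifying ($\varphi(a,a)=1$, lower-bound-on-qualifiers) case from the self-disqualifying ($\varphi(a,a)=0$, upper-bound-on-disqualifiers) case, and one must correctly count the fixed contribution of the other members of $S$ (which are all present by assumption) versus the variable contribution of the typed individuals. Once each constraint is written as ``sum of $y_\tau$ over types that qualify $a$'' $\ge s - (\text{const})$ or ``sum of $y_\tau$ over types that disqualify $a$'' $\le t - 1 - (\text{const})$, correctness of the reduction is essentially by construction: any feasible solution of the ILP yields an admissible $U$ and conversely. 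Finally, since the number of variables is at most $2^{|S|}$ and $L$ (the bit-length of the instance) is polynomial in the input size, Lemma~\ref{lenstra} solves the ILP in $O\!\left(2^{2.5\cdot 2^{|S|} \cdot |S| + o(2^{|S|})} \cdot L\right)$ time, which is of the form $O(f(|S|) \cdot |I|^{O(1)})$. This establishes that both problems are {\fpt} with respect to $|S|$, completing the proof.
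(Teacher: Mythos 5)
Your proposal is correct and follows essentially the same route as the paper: partition the addable (resp.\ deletable) individuals into at most $2^{|S|}$ types according to their opinion vectors over $S$, introduce one integer variable per type, express the consent-rule thresholds and the budget as linear constraints with the $S$-internal (and $T$-internal) contributions as constants, and invoke Lemma~\ref{lenstra}. The only cosmetic difference is that in the deletion case you count retained individuals per type where the paper counts deleted ones, which is an equivalent change of variables.
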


\begin{proof}
  We prove the theorem by giving ILP formulations for the {\gcii}
  and {\gcei} problems. The number of variables in the formulations is bounded by a function
  of $|S|$.
We first consider the {\gcii} problem.

  Let $(N,\varphi,S,T,k)$ be an instance of {\gcii} for $f^{(s, t)}$.
  Let $\mu=|S|$.
  We say two individuals $a,b\in N$
  have the \textit{same opinion} over $S$, if for every $c\in S$,
  it holds that $\varphi(a,c)=\varphi(b,c)$.
  Hereinafter, let $(a_1,a_2,\dots,a_n)$ be any arbitrary but fixed order of $N$.
  Let $S=\{a_{\lambda{(1)}},a_{\lambda{(2)}},\dots,a_{\lambda{(\mu)}}\}$
  where $1\leq \lambda(i)< \lambda(j)\leq n$ for every $1\leq i< j\leq \mu$.
  For an individual $a_i\in N$ where $1\leq i\leq n$,
  let $\varphi_{(a_i,S)}$ denote the vector $\langle \varphi(a_i,a_{\lambda{(1)}}), \varphi(a_i,a_{\lambda{(2)}}),\dots, \varphi(a_i,a_{\lambda{(\mu)}})\rangle$.

  The ILP formulation for
  the instance is as follows.
  For every $\mu$-dimensional 1-0 vector $\beta$, let
  $N_{\beta}=\{a_i\in N\setminus T \mid \varphi_{(a_i,S)}=\beta\}$ and
  $n_{\beta}=|N_{\beta}|$. We create a variable $x_{\beta}$ for
  every $\mu$-dimensional 1-0 vector $\beta$. Thus,
  there are in total $2^{\mu}$ variables.
  Each variable $x_{\beta}$ indicates how many individuals from $N_{\beta}$ are
  included in the solution $U$. These variables are subject to the
  following restrictions. Let $\mathfrak{V}$ be the set of all ${\mu}$-dimensional 1-0 vectors.

  (1) Since for every $\mu$-dimensional 1-0 vector $\beta$ there are at most
  $n_{\beta}$ individuals $a_i\in N\setminus T$ such that $\varphi_{(a_i,S)}=\beta$, we
  need to ensure that no more than $n_{\beta}$ of these individuals
  are in $U$. Moreover, every variable should be non-negative. Thus,
  every variable $x_{\beta}$ is subject to

  \[0\leq x_{\beta}\leq n_{\beta}.\]

  (2) Since we can add at most $k$ individuals in
  total, it has to be that

  \[\sum_{\beta\in {\mathfrak{V}}}x_{\beta}\leq k.\]

  (3) In order to make every individual in $S$ socially qualified,
  it has to be that

  (3.1) for every $a_{\lambda(i)}\in S$ such that $\varphi(a_{\lambda(i)},a_{\lambda(i)})=1$

  \[\sum_{a_j\in T}\varphi(a_j,a_{\lambda(i)})+\sum_{\beta\in \mathfrak{B}}(\beta[i]\cdot
  x_{\beta})\geq s;~\text{and}\]

  (3.2) for every $a_{\lambda(i)}\in S$ such that $\varphi(a_{\lambda(i)},a_{\lambda(i)})=0$

  \[\sum_{a_j\in T}(1-\varphi{(a_j,a_{\lambda(i)})})+\sum_{\beta\in \mathfrak{B}}((1-\beta[i])\cdot x_{\beta})\leq t-1,\] where $\beta[i]$ is the $i$-th component of $\beta$. The
  inequality (3.1) is to ensure that for every $a_{\lambda(i)}\in S$ who
  qualifies herself there are at least $s$
  individuals in the final profile who qualify $a_{\lambda(i)}$,
  and the inequality (3.2) is to ensure that for
  every individual $a_{\lambda(i)}\in S$ who disqualifies herself there
  are at most $t-1$ individuals in the final profile who disqualify $a_{\lambda(i)}$.

  Now let's consider the {\gcei} problem. Let $(f^{(s,t)},N,\varphi,S,k)$ be a given instance of the {\gcei} problem.
 The ILP formulation for the instance is similar to the one for the {\gcii} problem discussed above.
 Let $(a_1,a_2,\dots,a_n)$,
 $\{a_{\lambda{(1)}},a_{\lambda{(2)}},\dots,a_{\lambda{(\mu)}}\}$, $\varphi_{(a_i,S)}$, and $\mathfrak{V}$ be defined with the same meanings as above.
 For every ${\mu}$-dimensional 1-0 vector $\beta$, let
  $\overline{N}_{\beta}=\{a_{j}\in N\setminus S \mid
  \varphi_{(a_j,S)}=\beta\}$ and
  $\overline{n}_{\beta}=|\overline{N}_{\beta}|$. We create a variable
  $y_{\beta}$ for every $\beta\in \mathfrak{V}$.
  Each variable $y_{\beta}$ indicates how
  many individuals from $\overline{N}_{\beta}$ are deleted. The
  restrictions are as follows.

  (1) For every $\beta\in \mathfrak{V}$ we can delete at most
  $\overline{n}_{\beta}$ individuals in
  $\overline{N}_{\beta}$. Moreover, each variable should be
  non-negative. Thus, for every variable $y_{\beta}$, we have that

  \[0\leq y_{\beta}\leq \bar{n}_{\beta}.\]

  (2) Since we can delete at most $k$ individuals in total, we have that

  \[\sum_{\beta\in \mathfrak{V}}y_{\beta}\leq k.\]

  (3) In order to make every individual in $S$ socially
  qualified, it has to be that

  (3.1) for every $a_{\lambda(i)}\in S$ such that $\varphi{(a_{\lambda(i)},a_{\lambda(i)})}=1$

  \[\sum_{a_j\in N}\varphi{(a_j,a_{\lambda(i)})}-\sum_{\beta\in \mathfrak{V}}(\beta[i]\cdot
  y_{\beta})\geq s;~\text{and}\]

  (3.2) for every $a_{\lambda(i)}\in S$ such that $\varphi{(a_{\lambda(i)},a_{\lambda(i)})}=0$

  \[\sum_{a_j\in N}(1-\varphi(a_j,a_{\lambda(i)}))-\sum_{\beta\in \mathfrak{V}}((1-\beta[i])\cdot y_{\beta})\leq t-1.\]

According to Lemma~\ref{lenstra}, both ILPs shown above are
  solvable in time $O(v^{2.5v+\small{o}(v)}\cdot poly(v\cdot n))$, where
  $v=2^{\mu}$. As a result, both the {\gcii} and {\gcei} problems are {\fpt}
  with respect to $\mu=|S|$.
\end{proof}

Consider now the {\gcpi} problem for consent rules. In contrast to the fixed-parameter tractability of the {\gcii} and {\gcei} problems, we show that the {\gcpi} problem is unlikely to admit an {\fpt}-algorithm. In particular, we prove that the {\gcpi} problem for {\conr}s $f^{(s,2)}$ with $s\geq 3$ remains {\nph} even when $S$ is a singleton. This directly implies that the {\gcpi} problem for consent rules is not {\fpt}~\footnote{In fact, this implies that {\unsurehidden{the}} {\gcpi} {\unsurehidden{problem}} for consent rules is even beyond XP, the class of all parameterized problems which are solvable in $O(|I|^{f(k)})$ time, where $I$ is the main part, $k$ is the parameter and $f$ is a computable function.}.
Our reduction is from the {{LRBDS}} problem which is {\nph} as stated in the following lemma.

\begin{lemma}
\label{lem:multicolorredbluedominatingisnphard}
The {{LRBDS}} problem is {\nph}.
\end{lemma}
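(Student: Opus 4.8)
The plan is to prove {\nphns} of {{LRBDS}} by a reduction from the classical {\sc{Set Cover}} (or, more conveniently, {\sc{Dominating Set}}) problem, exploiting the fact that the labels on $R$ encode exactly a ``budget-free'' but ``one-per-color'' selection constraint. First I would recall an instance of {\sc{Set Cover}}: a universe $\mathcal{U}=\{u_1,\dots,u_m\}$, a family $\mathcal{S}=\{S_1,\dots,S_p\}$ of subsets of $\mathcal{U}$, and an integer $k$, asking whether $k$ sets cover $\mathcal{U}$. I would build a bipartite graph $G=(R\uplus B,E)$ by letting $B=\{b_{u}\mid u\in \mathcal{U}\}$ represent the universe elements and placing, for each set $S_j$, a collection of $k$ ``copies'' $r_j^{1},\dots,r_j^{k}$ in $R$, where copy $r_j^{\ell}$ carries label $\ell$. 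Each copy $r_j^{\ell}$ is made adjacent to exactly the vertices $\{b_u\mid u\in S_j\}$, so that every copy of a set dominates precisely the elements of that set.

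The key idea is that the label constraint $|W\cap R_i|\le 1$ forces a solution $W$ to pick \emph{at most one representative for each label} $i\in\{1,\dots,k\}$, hence at most $k$ vertices in total, while the freedom to choose \emph{which} copy (i.e.\ which set $S_j$) realizes each label lets $W$ encode any choice of at most $k$ sets. The correctness proof then splits into the two standard directions. For the forward direction, given a cover $\{S_{j_1},\dots,S_{j_k}\}$ of size (at most) $k$, I would select $W=\{r_{j_1}^{1},r_{j_2}^{2},\dots,r_{j_k}^{k}\}$, assigning the $\ell$-th chosen set to label $\ell$; this respects $|W\cap R_i|\le 1$ and dominates all of $B$ because every $u\in\mathcal{U}$ lies in some chosen $S_{j_\ell}$ and the corresponding copy is adjacent to $b_u$. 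For the reverse direction, from any valid $W$ I would read off the sets $\{S_j\mid$ some copy $r_j^{\ell}\in W\}$; since $|W|\le k$ (at most one per label) and $W$ dominates $B$, these at most $k$ sets cover $\mathcal{U}$.

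The reduction is clearly computable in polynomial time, since we create $kp$ red vertices, $m$ blue vertices, and at most $kpm$ edges. The main obstacle, and the point requiring the most care, is getting the label/budget correspondence exactly right: I must ensure that the ``one copy per label'' restriction together with the availability of a fresh label-$\ell$ copy for every set is what simulates the cardinality-$k$ budget of {\sc{Set Cover}}, rather than accidentally forcing distinct \emph{sets} to use distinct labels (which would be too weak) or allowing the same set to be counted multiple times under different labels (which is harmless but must be argued not to break the backward direction). Once the bijection between admissible $W$'s and $\le k$-subfamilies is pinned down, the equivalence of {\yes}-instances follows directly, establishing the lemma.
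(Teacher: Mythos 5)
Your proposal is correct and essentially identical to the paper's proof: the paper reduces from \textsc{Red-Blue Dominating Set} (which is just the bipartite-graph formulation of \textsc{Set Cover}) and likewise creates $k$ labeled copies of each red vertex, each with the same neighborhood, so that the ``one per label'' constraint simulates the cardinality budget $k$. The duplicate-set subtlety you flag in the backward direction is handled in the paper by noting that copies of the same set have identical neighborhoods, so one may assume without loss of generality that at most one copy per set is selected; your observation that duplicates are harmless serves the same purpose.
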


The proof for the above lemma is deferred to the Appendix.

\begin{theorem}
\label{thm:GCPIconsentthreetwonphard}
The {\gcpi} problem is {\nph} for {\conr}s $f^{(s,2)}$ with $s\geq 3$, even when $|S|=1$.
\end{theorem}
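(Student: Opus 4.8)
The plan is to reduce from the LRBDS problem, which is \nph{} by Lemma~\ref{lem:multicolorredbluedominatingisnphard}. Starting from an instance $G=(R\uplus B,E)$ with labels in $\{1,\dots,k\}$ and label classes $R_1,\dots,R_k$, I would build a \gcpi{} instance $(N,\varphi,S)$ for $f^{(s,2)}$ in which $S=\{a^\ast\}$ is a single individual who disqualifies herself, so that by the $t=2$ clause $a^\ast$ is socially qualified in a set $V\ni a^\ast$ precisely when no individual other than $a^\ast$ disqualifies $a^\ast$ inside $V$. The individual set $N$ would contain $a^\ast$, one individual $a_r$ for each red vertex $r\in R$, one individual $a_b$ for each blue vertex $b\in B$, together with a bounded collection of dummy individuals used (i) to push the effective support threshold up to $s$ (recall $s\ge 3$) and (ii) to realise the ``at most one per label'' restriction. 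The guiding correspondence is: a partition $(U,N\setminus U)$ should encode a choice, for every label $i$, of at most one red individual of class $R_i$ that survives the first application of $f$, and these surviving reds should play the role of the set $W$; the blue individuals should be removed by the first application of $f$ exactly when they are dominated by a surviving red, and $a^\ast$ should come out socially qualified in the second application of $f$ precisely when every blue individual has been removed.

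The key design idea is to make each blue individual $a_b$ disqualify $a^\ast$ and to make $a_b$ self-disqualifying as well, so that $a_b$ survives the first application on its own side only if none of its red neighbours sits on that side; thus $a_b\notin V$ (and hence does not disqualify $a^\ast$ in the second application) exactly when $a_b$ is dominated. To force the selection of reds to respect the label bound, I would let the individuals of a common label class $R_i$ all disqualify one another (and themselves); with $t=2$ this guarantees that at most one individual of each class can survive the first application on a given side, which is what turns a solving partition into a set $W$ with $|W\cap R_i|\le 1$. The role of the threshold $s\ge 3$, realised through the dummy individuals, is to separate ``one selected red per class'' from ``two or more'', so that choosing two reds of the same label cannot silently help and is instead detected as a violation that leaves a disqualifier of $a^\ast$ alive in $V$. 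For the forward direction I would, given a valid $W$, place the individuals of $W$ and all blue individuals on one side and $a^\ast$ on the other, and verify that after the first application only $a^\ast$ among the disqualifiers of $a^\ast$ survives, whence the second application keeps $a^\ast$; for the backward direction I would read off $W$ from the reds surviving the first application on the blue side and use the mutual-disqualification gadget together with the survival of $a^\ast$ to certify both $|W\cap R_i|\le 1$ and that $W$ dominates $B$.

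The hardest part, and the place where the construction has to be engineered with care, is the coordination of the two successive applications of $f$ against an adversarial partition: because the consent rule scores an individual using the opinions of \emph{all} members of the current set (including those that are themselves going to be removed), domination has to be arranged so that it genuinely happens in the first application and nothing I rely on can be undone by the solver relocating individuals across the partition. In particular I must ensure that a blue individual placed on the same side as $a^\ast$ cannot help $a^\ast$ (indeed it would immediately disqualify $a^\ast$), that surviving reds which are merely wasted on the $a^\ast$-side stay harmless, and that the dummy individuals enforcing the $s\ge 3$ threshold and the per-label bound behave correctly no matter how the partition splits them. Getting these robustness conditions to hold simultaneously, while keeping $|S|=1$, is the technical crux; once the profile is fixed so that ``$a^\ast$ socially qualified'' is logically equivalent to ``the reds surviving on the blue side form a label-respecting dominating set'', the equivalence of the two instances follows and establishes \nphns{} of \gcpi{} for $f^{(s,2)}$ with $s\ge 3$ even for $|S|=1$.
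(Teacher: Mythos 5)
Your overall architecture matches the paper's: a reduction from LRBDS with a single self-disqualifying target $a^\ast$, blue individuals that disqualify $a^\ast$ and themselves (so that under $t=2$ a blue is eliminated in the first round exactly when a red neighbour is present on its side), $s-3$ dummies to realise the threshold $s$, and some detector that leaves a disqualifier of $a^\ast$ alive whenever the label bound is violated. Your forward-direction placement (selected reds and all blues on the side opposite $a^\ast$) is also the paper's.

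The gap is in the only mechanism you actually specify for the ``at most one per label'' constraint. You propose that all reds of a common class $R_i$ disqualify one another and themselves, so that at most one of them \emph{survives} the first application, and you then read $W$ off as the set of surviving reds on the blue side. But a consent rule counts the opinions of \emph{every} individual present in the set it is applied to, including individuals that end up not socially qualified; hence a blue is eliminated as soon as some red neighbour is merely \emph{present} on its side, whether or not that red survives. Consequently the surviving reds need not dominate $B$ (a blue may be dominated only by reds that kill each other off), while the present reds, which do dominate $B$, are not bounded by one per label by your gadget. Concretely, a partition that puts all of $R$ on the blue side already defeats the construction: every blue is eliminated, the mutually-disqualifying reds are eliminated harmlessly, and $a^\ast$ comes out socially qualified even when no label-respecting dominating set exists, so the backward direction fails. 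What is missing is the piece you only gesture at (``detected as a violation that leaves a disqualifier of $a^\ast$ alive''): the paper introduces, for each label $i$, a guard individual $c_i$ that disqualifies $a^\ast$, qualifies herself, and is qualified by exactly the label-$i$ reds, the $s-3$ dummies, and herself; then $c_i$ reaches the qualification threshold $s$ and survives --- thereby blocking $a^\ast$ in the second application --- precisely when two or more label-$i$ reds are present on that side. With such guards the mutual-disqualification gadget becomes unnecessary; without them (or an equivalent presence-counting device) the reduction is not correct.
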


\begin{proof}
We prove the theorem by a reduction from the LRBDS problem. Let $\mathcal{I}=(G=(R\uplus B,E), \{1,2,\dots,k\})$ be an instance of the LRBDS problem.
Let $s\geq 3$. We create an instance ${\mathGCPIthreetwoins}=(N,\varphi,S)$ of {\gcpi} for $f^{(s,2)}$ as follows.
We create $k+s-2+|B|+|R|$ individuals in total. Let $(R_1,R_2,\dots,R_k)$ be the partition of $R$ with respect to the labels of the vertices. That is, $R_i$ where $1\leq i\leq k$, is the set of vertices in $R$ with label $i$. For each vertex $v\in R_i$ where $1\leq i\leq k$, we create an individual $a_i(v)$. Let $A_i=\{a_i(v)\mid v\in R_i\}$. Moreover, for every vertex $u\in B$, we create an individual $a(u)$. Let $A(B)=\{a(u)\mid u\in B\}$. In addition, we create a set $C=\{c_1,c_2,\dots,c_k\}$ of $k$ individuals where each $c_i, 1\leq i\leq k$, corresponds to the label $i$. Moreover, we create an individual $w$ and set $S=\{w\}$. Finally, we create a set $A_{dummy}=\{d_1,d_2,\dots,d_{s-3}\}$ of $s-3$ dummy individuals. Hence, $N=\bigcup_{1\leq i\leq k}A_i\cup A(B)\cup C\cup A_{dummy}\cup S$. The profile $\varphi$ is defined as follows.

\begin{enumerate}
\item $\varphi(w,w)=0$;

\item \label{item:auau} $\varphi(a(u),a(u'))=0$ for every $u,u'\in B$ if and only if $u=u'$;

\item $\varphi(c_i,c_j)=1$ for every $c_i,c_j\in C$ if and only if $i=j$;

\item $\varphi(x,w)=0$ for every $x\in C\cup A(B)$;

\item \label{item:redverticesqualifya} $\varphi(a_i(v),w)=1$ for every $v\in R_i$ where $1\leq i\leq k$;

\item $\varphi(c_i,a(u))=1$ for every $c_i\in C$ and $a(u)\in A(B)$;

\item $\varphi(a(u),c_i)=0$ for every $a(u)\in A(B)$ and $c_i\in C$;

\item $\varphi(d_i,d_{i'})=0$ for every $d_i,d_{i'}\in A_{dummy}$;

\item $\varphi(d_i,w)=0$ for every $d_i\in A_{dummy}$;

\item \label{item:dummyindividual} $\varphi(d_i,x)=1$ for every $d_i\in A_{dummy}$ and every $x\in N\setminus (A_{dummy}\cup \{w\})$;

\item \label{item:individualdummy} $\varphi(x, d_i)=0$ for every $d_i\in A_{dummy}$ and every $x\in N\setminus (A_{dummy}\cup \{w\})$;

\item \label{item:encodinggraph} $\varphi(a_i(v),a(u))=0$ for every $v\in R_i$ where $1\leq i\leq k$ and every $a(u)\in A(B)$ if and only if $(v,u)\in E$;

\item \label{item:encodinglabels} $\varphi(a_i(v),c_j)=1$ for every $a_i(v)\in R_i$ and $c_j\in C$ if and only if $i=j$; and

\item $\varphi(x,y)$ which is not defined above can be set arbitrarily.
\end{enumerate}

Now we show the correctness of the reduction.

${(\Rightarrow:)}$ Let $W$ be a labeled red-blue dominating set of $G$. We shall show that ${\mathGCPIthreetwoins}$ is a {\yesins}.

Let $U\subseteq N$ be the set consisting of the individual $w$ and all individuals that correspond to $R\setminus W$. That is, $U=S\cup \{a_i(v)\mid v\in R_i\setminus W, 1\leq i\leq k\}$. Since $\varphi(w,w)=0$, and every individual corresponding to some vertex in $R$ qualifies $w$ (see~\ref{item:redverticesqualifya}), it holds that $w\in f^{(s,2)}(\varphi,U)$.

Now, let's consider the profile restricted to $N\setminus U$. Observe that
\[\left(N\setminus U\right) \cap \left(\bigcup_{1\leq i\leq k}A_i\right)=\left\{a_i(v)\mid v\in R_i\cap W, 1\leq i\leq k\right\}.\]
Let $a(u)$ be a candidate in $A(B)$ with $u\in B$. According to the construction of $\varphi$ and the fact that $W$ dominates $B$, there is at least one individual $a_i(v)$, corresponding to a vertex $v\in W$ dominating $u$, who disqualifies $a(u)$ (see~\ref{item:encodinggraph}). Since $\varphi(a(u),a(u))=0$ (see~\ref{item:auau}), it holds that $a(u)\not\in f^{(s,2)}(\varphi,N\setminus U)$. Since this holds for every $a(u)\in A(B)$, we have that $A(B)\cap f^{(s,2)}(\varphi,N\setminus U)=\emptyset$. On the other hand, for every $1\leq i\leq k$, since $|W\cap R_i|\leq 1$, $N\setminus U$ contains at most one individual $a_i(v)\in A_i$. According to the construction of $\varphi$, for every $c_i\in C$ only the following $s-1$ individuals in $N\setminus U$ qualify $c_i$:
\begin{enumerate}
\item[(1)] $c_i$ herself;

\item[(2)] $a_i(v)\in A_i$ where $v\in W$ (see~\ref{item:encodinglabels}); and

\item[(3)] all $s-3$ dummy individuals (see~\ref{item:dummyindividual}).
\end{enumerate}
It directly follows that $c_i\not\in f^{(s,2)}(\varphi,N\setminus U)$ for every $c_i\in C$. Finally, since $\varphi(d_i,d_{i'})=0$ for every $d_i,d_{i'}\in A_{dummy}$ and all individuals in $N\setminus U$ disqualify all dummy individuals, it holds that $d_i\not\in f^{(s,2)}(\varphi,N\setminus U)$ for every $d_i\in A_{dummy}$. In conclusion, $(A(B)\cup C\cup A_{dummy})\cap f^{(s,2)}(\varphi,N\setminus U)=\emptyset$. Now, it is easy to verify that $\varphi(x,w)=1$ for every $x\in (f^{(s,2)}(\varphi,U)\cup f^{(s,2)}(\varphi,N\setminus U)\setminus \{w\})$. As a result, $w\in f^{(s,2)}(\varphi, f^{(s,2)}(\varphi,U)\cup f^{(s,2)}(\varphi,N\setminus U))$.

$(\Leftarrow:)$ Let $U\subseteq N$ such that $w\in f^{(s,2)}(\varphi, f^{(s,2)}(\varphi,U)\cup f^{(s,2)}(\varphi,N\setminus U))$. Due to symmetry, assume that $w\in U$. Since $\varphi(w,w)=0$, all the other individuals who disqualify $w$ must be in $N\setminus U$. That is, $A(B)\cup C\cup A_{dummy}\subseteq N\setminus U$. Moreover, all individuals in $A(B)\cup C\cup A_{dummy}$ must be eliminated in the profile restricted to $N\setminus U$, i.e., $(A(B)\cup C\cup A_{dummy})\cap f^{(s,2)}(\varphi,N\setminus U)=\emptyset$. Let $a(u)$ be a vertex in $A(B)$ with $u\in B$. Since $\varphi(a(u),a(u))=0$, to eliminate $a(u)$, at least one individual who disqualifies $a(u)$ must be in $N\setminus U$. Due to the construction of the profile, all individuals in $N\setminus U$ who disqualify $a(u)$, except $a(u)$ herself, are in $\bigcup_{1\leq i\leq k}A_i$. Hence, at least one $a_i(v)\in A_i$ with $v\in R_i$ who disqualifies $a(u)$ must be in $N\setminus U$. According to the construction, the vertex $v$ dominates $u$ in the graph $G$. This implies that $W=\{v\in R\mid a_i(v)\in N\setminus U\}$ dominates $B$. Now, we show that $W$ contains at most one vertex in each $R_i$ where $1\leq i\leq k$. Let $c_i$ be an individual in $C$ where $1\leq i\leq k$. Since $\varphi(c_i,c_i)=1$, in order to eliminate each $c_i$, at most $s-1$ individuals who qualify $c_i$ can be in $N\setminus U$. According to the construction of the profile, all the $s-3$ dummy individuals in $A_{dummy}$ qualify $c_i$. Moreover, all individuals in $A_i$ qualify $c_i$. According to the above discussion, at most one of the individuals in $A_i$ can be in $N\setminus U$. Due to the definition of $\varphi$, this implies that $|W\cap R_i|\leq 1$. Now, it is easy to see that $W$ is a solution of the instance $\mathcal{I}$.
\end{proof}

\section{Conclusion}
\label{sec_conclusion}
We have studied the complexity of the group control by adding individuals ({\gcii}), group control by deleting individuals ({\gcei}), and group control by partitioning of individuals ({\gcpi}) problems for the consent rules $f^{(s, t)}$, the consensus-start-respecting rule $f^{CSR}$, and the liberal-start-respecting rule $f^{LSR}$, where in each problem an external agent has an incentive to make a given subset of individuals socially qualified by adding, or deleting a limited number of individuals, or by partitioning the set of individuals. In particular, as summarized in Table~\ref{tab:dichtomyforconsetrules},
we achieved dichotomy results for all three group control problems for consent rules, with respect to the values of the consent quotas. In addition, we studied the {\nph} problems from the parameterized complexity point of view, with respect to the size of $S$, the set of individuals whom the external agent wants to make socially qualified. We proved that {\gcii} and {\gcei} for consent rules are generally {\fpt}. On the other hand, {\gcpi} remains {\nph} for some consent rules even when $|S|=1$, excluding the possibility that the {\gcpi} problem for consent is {\fpt}, unless the parameter hierarchy collapses at some level.

Table~\ref{tab:dichtomyforconsetrules} shows that almost all social rules studied in this paper resist the three different control types, in the sense that either control problems for these rules are {\nph} or these rules are immune to the corresponding control types. Only the {\gcei} problem for consent rules $f^{(s,2)}$ and for $f^{CSR}$ are polynomial-time solvable. From the parameterized complexity point of view, {\gcii} and {\gcei} for consent rules are {\fpt} while the  procedural rule
$f^{LSR}$ is immune to {\gcei} and {\gcpi}.  So, we can conclude that the procedural rule $f^{LSR}$ outperforms the consent rules and $f^{CSR}$ in terms of resistance to control behavior. Note that whether {\gcii} for the two procedural rules is {\fpt} with respect to $|S|$ remains open. Moreover, whether {\gcpi} for $f^{CSR}$ is {\nph} remains open.

Following the workshop version of this paper, there have been other papers which look at similar problems. In particular, Erd\'{e}lyi, Reger, and Yang~\cite{AAMAS17ErdelyiRYBriberyControlGroupIdentification} extended our study to destructive control and constructive/destructive bribery problems in group identification. In addition, Erd\'{e}lyi, Reger, and Yang~\cite{DBLP:conf/aldt/ErdelyiRY17} also considered possibly and necessarily socially qualified individuals problems in group identification. Nevertheless, there still remain many directions for future research. For instance, for consent rules, we assume that the consent quotas do not change. It would be quite natural to consider group control problems where $s$ and $t$ depend on the number of individuals, say, e.g.,  $s$ is $30\%$ percent and $t$ is $35\%$ of the number of individuals. In addition, it is interesting to investigate the complexity of group control problems in restricted domains of dichotomous preferences~\cite{DBLP:conf/ijcai/ElkindL15}. One can also study control and bribery problems in generalized group identification~\cite{ChoJu} where the aim is to classify the individuals into multiple groups instead of putting them into only two classes.

\section*{Acknowledgement}
We thank Shao-Chin Sung and the anonymous reviewers of JAAMAS and COMSOC 2016 for their valuable comments.

\bibliographystyle{spmpsci}

\newpage\section*{Appendix}
The appendix is devoted to the proof of Lemma~\ref{lem:multicolorredbluedominatingisnphard}. In particular, we prove the {\nphns} of the {{LRBDS}} problem by a reduction from the {\sc{Red-Blue Dominating Set}} problem which is {\nph}~\cite{garey}.

\begin{description}
\item[{\sc{Red-Blue Dominating Set}}] (RBDS)
  \begin{itemize}
  \item[\textit{Input}:] A bipartite graph $G=(R\uplus B,E)$ and an integer $k$.
  \item[\textit{Question}:] Is there a subset
    $W\subseteq R$ such that $|W|\leq k$ and $W$ dominates $B$?
  \end{itemize}
\end{description}

Let $I'=(G'=(R'\uplus B', E'),k)$ be an instance of the RBDS problem. We construct an instance $I=(G=(R\uplus B, E), \{1,2,\dots,k\})$ for the LRBDS problem as follows.
For each vertex $u\in B'$, we create a vertex $\bar{u}\in B$. For each vertex $v\in R'$, we create $k$ vertices $v(1),\dots,v(k)\in R$, where the vertex $v(i)$ is labeled with $i$. Let $R_i$ be the set of the vertices in $R$ that have label $i$. The edges of the graph $G$ are defined as follows. If there is an edge $\edge{v}{u}\in E'$, then for every $1\leq i\leq k$ we create an edge between $v(i)$ and $\bar{u}$. This finishes the construction. It clearly takes polynomial time.

Suppose that $I'$ has a solution $W'$ of size $k'\leq k$. Let $(v_{x(1)},v_{x(2)},\dots,v_{x(k')})$ be any arbitrary order of the vertices in $W'$. Let $W=\{v(i)\mid v_{x(i)}\in W', 1\leq i\leq k'\}$. It is clear that no two vertices in $W$ have the same label, that is, $|W\cap R_i|\leq 1$ for every $1\leq i\leq k$. We shall show that $W$ dominates $B$. Let $u$ be a vertex in $B'$. Since $W'$ dominates $B'$, there is a vertex $v_{x(i)}\in W'$ such that $\edge{v_{x(i)}}{u}\in E'$. Then, according to the construction of $G$, we know that $\edge{v(i)}{\bar{u}}\in E$. Since this holds for every $u\in B'$, $W$ dominates $B$.

Suppose that $I$ has a solution $W$. We assume that for every $v\in R'$, $W$ contains at most one of $\{v(1),v(2),\dots,v(k)\}$. Indeed if $W$ contains two vertices $v(i)$ and $v(j)$ where $1\leq i\neq j\leq k$, then we could get a new solution $W\setminus \{v(j)\}$ for $I$, since according to the construction of the graph $G$, $v(i)$ and $v(j)$ have the same neighborhood, implying that a vertex in $B$ dominated by $v(j)$ is dominated by $v(i)$.
Let $W'=\{v\in R'\mid v(i)\in W, 1\leq i\leq k\}$. Let $u$ be a vertex in $B'$. Since $W$ is a solution of $I$, there is a vertex $v(i)\in W$ which dominates $\bar{u}\in B$. Then, according to the construction of the graph $G$, the vertex $v\in W'$ dominates $u$. It follows that $W'$ dominates $B'$.
\end{document}